\newcommand{\be}[0]{\begin{equation}}
\newcommand{\ee}[0]{\end{equation}}
\numberwithin{equation}{section}
\theoremstyle{plain}
\newtheorem{theorem}{Theorem}[section]
\newtheorem{lemma}[theorem]{Lemma}
\begin{document}

\title[On hyperspherical associated Legendre functions]{On hyperspherical associated Legendre functions: the extension of spherical harmonics to $N$ dimensions}

\author[L. M. B. C. Campos and M. J. S. Silva]{L. M. B. C. Campos and M. J. S. Silva}

\begin{abstract}
The solution in hyperspherical coordinates for $N$ dimensions is given for a general class of partial differential equations of mathematical physics including the Laplace, wave, heat and Helmholtz, Schr\"{o}dinger, Klein-Gordon and telegraph equations and their combinations. The starting point is the Laplacian operator specified by the scale factors of hyperspherical coordinates. The general equation of mathematical physics is solved by separation of variables leading to the dependencies: (i) on time by the usual exponential function; (ii) on longitude by the usual sinusoidal function; (iii) on radius by Bessel functions of order generally distinct from cylindrical or spherical Bessel functions; (iv) on one latitude by associated Legendre functions; (v) on the remaining latitudes by an extension, namely the hyperspherical associated Legendre functions. The original associated Legendre functions are a particular case of the Gaussian hypergeometric functions, and the hyperspherical associated Legendre functions are also a more general particular case of the Gaussian hypergeometric functions so that it is not necessary to consider extended Gaussian hypergeometric functions.
\end{abstract}

\keywords{Hyperspherical coordinates; Laplacian operator; Generalized equation of mathematical physics; Exponential functions; Bessel functions; Associated Legendre functions}

\maketitle

\section{Introduction}
\label{sec:1}

One of the classical problems of analysis, with a multitude of physical applications in fluids \cite{Lamb1945}, solids \cite{Love1944}, electromagnetism \cite{Stratton1941}, acoustics \cite{Rayleigh1945} and quantum mechanics \cite{Landau1965}, is the solution of the Laplace and related equations in terms of spherical harmonics \cite{Forsyth1956,MacRobert1967,Whittaker1996,Hobson1931,Erdelyi1953}. The present paper considers the extension to $N$ dimensions using hyperspherical coordinates, consisting of one radial distance, one longitude and $N-2$ latitudes (section \ref{sec:2}); the transformation from $N$-dimensional Cartesian coordinates (subsection \ref{sec:2.1}) proves that hyperspherical coordinates form an orthogonal curvilinear system, and the scale factors (subsection \ref{sec:2.2}) specify the Laplace operator (subsection \ref{sec:2.3}). The latter is used in a generic equation of mathematical physics (section \ref{sec:3}) combining the Laplace, wave, heat, Helmholtz, Schr\"{o}dinger, Klein-Gordon and telegraph equations (subsection \ref{sec:3.1}). The solution by separation of variables (subsection \ref{sec:3.2}) leads to the usual exponential dependence on time and sinusoidal dependence on longitude, and: (i) the radial dependence involves Bessel functions, generally of order distinct from cylindrical and spherical Bessel functions (subsection \ref{sec:3.3}); (ii) only one latitudinal dependence is specified by associated Legendre functions, and all others involve a generalization, namely the hyperspherical associated Legendre functions (section \ref{sec:4}). The latter (subsection \ref{sec:4.1}) first appears when using hyperspherical harmonics in four dimensions (subsection \ref{sec:4.2}), and one more arise in the general solution of the generic equation of mathematical physics in hyperspherical coordinates for higher dimensions (subsection \ref{sec:4.3}). 

Thus the main new feature (section \ref{sec:5}) is the introduction of hyperspherical associated Legendre functions (subsection \ref{sec:5.1}) which, although more general than the original associated Legendre functions (subsection \ref{sec:5.2}), are also reducible to the Gaussian hypergeometric functions \cite{Forsyth1956,Whittaker1996,Klein1933,Appell1925,Erdelyi1953,Ince1956,Campos2011,Campos2012} (subsection \ref{sec:5.3}); thus they do not require further generalization (section \ref{sec:6}), for example to the extended Gaussian hypergeometric functions \cite{Campos2000a,Campos2001}. The hyperspherical associated Legendre functions introduced in the present paper generalize not only the associated Legendre functions \cite{MacRobert1967,Hobson1931}, but also the hyperspherical Legendre functions \cite{Campos2014}.

\section{Hyperspherical coordinates as a curvilinear orthogonal system in $N$ dimensions}
\label{sec:2}

The relation between hyperspherical and Cartesian coordinates in $N$ dimensions (subsection \ref{sec:2.1}) specifies the base vectors and hence the metric tensors (subsection \ref{sec:2.2}), proving it is an orthogonal curvilinear system. The scale factors can be used to specify the invariant differential operators in hyperspherical coordinates, such as the Laplacian (subsection \ref{sec:2.3}).

\subsection{Transformations between hyperspherical and Cartesian coordinates}
\label{sec:2.1}

The hyperspherical coordinates in $N$ dimensions are defined by the relation with the Cartesian coordinates as a generalization of polar and spherical coordinates. All are orthogonal coordinate systems with straightforward inversion.

The hyperspherical coordinates are an $N$-dimensional generalization of spherical coordinates with one radius, $0\leq r<\infty$, one longitude, $0\leq\phi\leq 2\pi$, and $N-2$ latitudes, $0\leq\theta_1, \ldots, \theta_{N-2}\leq\pi$, defined by the transformation to Cartesian coordinates in $N$ dimensions:
\begin{align}
\begin{aligned}
&x_1=r\cos\theta_1,\\
&x_2=r\sin\theta_1\cos\theta_2,\\
&x_3=r\sin\theta_1\sin\theta_2\cos\theta_3,\\
&\vdots\\
&x_n=r\sin\theta_1\sin\theta_2\ldots\sin\theta_{n-1}\cos\theta_n,\\
&\vdots\\
&x_{N-2}=r\sin\theta_1\sin\theta_2\ldots\sin\theta_{N-3}\cos\theta_{N-2},\\
&x_{N-1}=r\sin\theta_1\sin\theta_2\ldots\sin\theta_{N-2}\cos\phi,\\
&x_{N}=r\sin\theta_1\sin\theta_2\ldots\sin\theta_{N-2}\sin\phi.
\end{aligned}
\label{eq2}
\end{align}

In two dimensions, $N=2$, this corresponds to the transformation from polar coordinates ($r$, $\phi$) to Cartesian coordinates with $x_1\equiv x$ and $x_2\equiv y$; in three dimensions, $N=3$, it corresponds to the transformation from spherical ($r$, $\theta$, $\phi$) to Cartesian coordinates with $x_1\equiv z$ along the polar axis, and with $x_2\equiv x$ and $x_3\equiv y$ transversely. In $N$ dimensions, the radius $r$ and longitude $\phi$ remain, and more latitudes $\theta_1, \ldots, \theta_{N-2}$ are introduced. The name hyperspherical coordinates arises because the first coordinate hypersurface is an hypersphere.

The coordinate transformation inverse to \eqref{eq2}, that is, from $N$-dimensional Cartesian to hyperspherical coordinates, is
\begin{equation}
\begin{aligned}
&r=\left|
\left(x_1\right)^2+\left(x_2\right)^2+ \ldots +\left(x_N\right)^2
\right|^{1/2},\\
&\cot\theta_1=x_1\left|
\left(x_2\right)^2+ \ldots +\left(x_N\right)^2
\right|^{-1/2},\\
&\cot\theta_2=x_2\left|
\left(x_3\right)^2+ \ldots + \left(x_N\right)^2
\right|^{-1/2},\\
&\vdots\\
&\cot\theta_n=x_n\left|
\left(x_{n+1}\right)^2+ \ldots +\left(x_N\right)^2
\right|^{-1/2},\\
&\vdots\\
&\cot\theta_{N-2}=x_{N-2}\left|
\left(x_{N-1}\right)^2+\left(x_N\right)^2
\right|^{-1/2},\\
&\cot\phi=\frac{x_{N-1}}{x_N}
\end{aligned}
\label{eq3}
\end{equation}
where the radius is evaluated through the first equation of \eqref{eq3} and the circular cotangent was used in all others equations of \eqref{eq3}. The coordinate hypersurface $r=\mathrm{const}$ is an hypersphere of radius $r$.

It would be possible to specify the transformation from $N$-dimensional Cartesian to hyperspherical coordinates using in \eqref{eq3} instead of the cotangent, the tangent, sine, cosine, secant or cosecant functions.

\subsection{Base vectors and scale factors}
\label{sec:2.2}

The direct and inverse transformations between $N$-dimensional hyperspherical and Cartesian coordinates specify the scale factors and hence the metric tensor and volume elements.

The Cartesian components of the hyperspherical base vectors follow from the transformation \eqref{eq2}, from hyperspherical to Cartesian coordinates,
\begin{equation}
\begin{aligned}
\overrightarrow{e}_r&=\frac{\partial x_i}{\partial r}=\left\lbrace\cos\theta_1, \sin\theta_1\cos\theta_2, \sin\theta_1\sin\theta_2\cos\theta_3, \ldots,\right.\\
&\sin\theta_1\sin\theta_2\sin\theta_3\ldots\sin\theta_{n-1}\cos\theta_n,\ldots, \sin\theta_1\sin\theta_2\sin\theta_3\ldots\sin\theta_{N-3}\cos\theta_{N-2},\\
&\left.\sin\theta_1\sin\theta_2\sin\theta_3\ldots\sin\theta_{N-2}\cos\phi, \sin\theta_1\sin\theta_2\sin\theta_3\ldots\sin\theta_{N-2}\sin\phi\right\rbrace,\\
\overrightarrow{e}_1&=\frac{\partial x_i}{\partial\theta_1}=r\left\lbrace -\sin\theta_1, \cos\theta_1\cos\theta_2, \cos\theta_1\sin\theta_2\cos\theta_3, \ldots, \right.\\
&\cos\theta_1\sin\theta_2\sin\theta_3\ldots\sin\theta_{n-1}\cos\theta_n, \ldots, \cos\theta_1\sin\theta_2\sin\theta_3\ldots\sin\theta_{N-3}\cos\theta_{N-2},\\
&\left. \vphantom{\sin\theta_{N-2}}\cos\theta_1\sin\theta_2\sin\theta_3\ldots\sin\theta_{N-2}\cos\phi, \cos\theta_1\sin\theta_2\sin\theta_3\ldots\sin\theta_{N-2}\sin\phi\right\rbrace,\\
\overrightarrow{e}_2&=\frac{\partial x_i}{\partial\theta_2}=r\sin\theta_1\left\lbrace 0, -\sin\theta_2, \cos\theta_2\cos\theta_3, \cos\theta_2\sin\theta_3\cos\theta_4, \ldots,\right.\\
&\cos\theta_2\sin\theta_3\sin\theta_4\ldots\sin\theta_{n-1}\cos\theta_n,\ldots, \cos\theta_2\sin\theta_3\sin\theta_4\ldots\sin\theta_{N-3}\cos\theta_{N-2},\\
&\left. \cos\theta_2\sin\theta_3\sin\theta_4\ldots\sin\theta_{N-2}\cos\phi, \cos\theta_2\sin\theta_3\sin\theta_4\ldots\sin\theta_{N-2}\sin\phi\right\rbrace,\\
&\vdots\\
\overrightarrow{e}_n&=\frac{\partial x_i}{\partial\theta_n}=r\sin\theta_1\sin\theta_2\ldots\sin\theta_{n-1}\left\lbrace 0, 0,\ldots, 0,-\sin\theta_n, \cos\theta_n\cos\theta_{n+1},\right.\\
& \cos\theta_n\sin\theta_{n+1}\cos\theta_{n+2},\ldots, \cos\theta_n\sin\theta_{n+1}\sin\theta_{n+2}\ldots\sin\theta_{N-3}\cos\theta_{N-2},\\
&\left. \cos\theta_n\sin\theta_{n+1}\ldots\sin\theta_{N-2}\cos\phi, \cos\theta_n\sin\theta_{n+1}\ldots\sin\theta_{N-2}\sin\phi\right\rbrace,\\
&\vdots\\
\overrightarrow{e}_{N-2}&=\frac{\partial x_i}{\partial\theta_{N-2}}=r\sin\theta_1\sin\theta_2\ldots\sin\theta_{N-3}\left\lbrace 0, 0,\ldots, 0, -\sin\theta_{N-2}, \right. \\
&\left.\cos\theta_{N-2}\cos\phi, \cos\theta_{N-2}\sin\phi\right\rbrace,\\
\overrightarrow{e}_\phi&=\frac{\partial x_i}{\partial\phi}=r\sin\theta_1\sin\theta_2\ldots\sin\theta_{N-2}\left\lbrace 0,0,\ldots,0, -\sin\phi, \cos\phi\right\rbrace,
\end{aligned}
\label{eq4}
\end{equation}
showing that all base vectors are orthogonal because
\begin{equation}
\forall\, n=1,\ldots, N-2, \quad \overrightarrow{e}_r\cdot\overrightarrow{e}_n=\overrightarrow{e}_\phi\cdot\overrightarrow{e}_n=\overrightarrow{e}_r\cdot\overrightarrow{e}_\phi=0.
\label{eq5}
\end{equation}

The hyperspherical coordinates are an orthogonal curvilinear coordinate system in $N$ dimensions, and the modulus of the base vectors specify the scale factors:
\begin{align}
\forall\, i=1,\ldots,N \quad h_i\equiv\left|
\overrightarrow{e}_i
\right|=\left\lbrace \vphantom{\sin\theta_{N-2}} 1, r, r\sin\theta_1, r\sin\theta_1\sin\theta_2,\ldots, \right.\nonumber\\
\left.r\sin\theta_1\sin\theta_2\ldots\sin\theta_{i-2},\ldots,r\sin\theta_1\sin\theta_2\ldots\sin\theta_{N-2}\right\rbrace.
\label{eq6}
\end{align}
The scale factors specify the covariant
\begin{subequations}
\begin{equation}
g_{ij}=\left(h_i\right)^2\delta_{ij} \label{eq7a}
\end{equation}
and contravariant
\begin{equation}
g^{ij}=\left(h_i\right)^{-2}\delta_{ij} \label{eq7b}
\end{equation}
\end{subequations}
metric tensors where $\delta_{ij}$ is the identity matrix \cite{Sokolnikoff1951}. The determinant $g$ of the covariant metric tensor,
\begin{subequations}
\begin{align}
\left|g\right|^{1/2}&=\prod_{i=1}^N h_i\nonumber\\
&=r^{N-1}\sin^{N-2}\theta_1\sin^{N-3}\theta_2\ldots\sin^{N-i}\theta_{i-1}\ldots\sin^2\theta_{N-3}\sin\theta_{N-2},
\label{eq8a}
\end{align}
specifies the volume element in hyperspherical coordinates \cite{Sokolnikoff1951}:
\begin{equation}
\mathrm{d} V=\left|
g
\right|^{1/2}\mathrm{d} r\,\mathrm{d}\theta_1\ldots\mathrm{d}\theta_{N-2}\,\mathrm{d}\phi.
\label{eq8b}
\end{equation}
\end{subequations}

\subsection{$N$-dimensional Laplacian in hyperspherical coordinates}
\label{sec:2.3}

The scale factors can be used to write any invariant differential operator in hyperspherical coordinates, for example the Laplacian operator. 

The Laplacian is given in terms of the metric tensor by
\begin{subequations}
\begin{equation}
\nabla^2=\frac{1}{\sqrt{g}}\frac{\partial}{\partial x_i}\left(g^{ij}\sqrt{g}\frac{\partial}{\partial x_j}\right)
\label{eq11a}
\end{equation}
that simplifies to
\begin{equation}
\nabla^2=\frac{1}{\sqrt{g}}\frac{\partial}{\partial x_i}\left[\frac{1}{\left(h_i\right)^2}\sqrt{g}\frac{\partial}{\partial x_i}\right]
\label{eq11b}
\end{equation}
\end{subequations}
for orthogonal curvilinear coordinates \cite{Sokolnikoff1951} in terms of the scale factors \eqref{eq6}. Using the scale factors in hyperspherical coordinates, the successive terms are: (i) for the radius,
\begin{subequations}
\begin{equation}
\frac{1}{\sqrt{g}}\frac{\partial}{\partial r}\left[\sqrt{g}\left(h_1\right)^{-2}\frac{\partial}{\partial r}\right]=\frac{1}{r^{N-1}}\frac{\partial}{\partial r}\left(r^{N-1}\frac{\partial}{\partial r}\right)=\frac{\partial^2}{\partial r^2}+\frac{N-1}{r}\frac{\partial}{\partial r},
\label{eq12a}
\end{equation}
which coincides with the radial part of the Laplacian in polar coordinates for $N=2$ or spherical coordinates for $N=3$; (ii) for the first latitude,
\begin{equation}
\frac{1}{\sqrt{g}}\frac{\partial}{\partial \theta_1}\left[\sqrt{g}\left(h_2\right)^{-2}\frac{\partial}{\partial \theta_1}\right]=\frac{1}{r^2\sin^{N-2}\theta_1}\frac{\partial}{\partial\theta_1}\left(\sin^{N-2}\theta_1\frac{\partial}{\partial\theta_1}\right);
\label{eq12b}
\end{equation}
(iii) for the second latitude,
\begin{equation}
\frac{1}{\sqrt{g}}\frac{\partial}{\partial \theta_2}\left[\sqrt{g}\left(h_3\right)^{-2}\frac{\partial}{\partial \theta_2}\right]=\frac{1}{r^2\sin^2\theta_1\sin^{N-3}\theta_2}\frac{\partial}{\partial\theta_2}\left(\sin^{N-3}\theta_2\frac{\partial}{\partial\theta_2}\right);
\label{eq12c}
\end{equation}
(iv) for the $n$-th latitude,
\begin{align}
\forall\,n=1,\ldots,N-2: \quad \frac{1}{\sqrt{g}}\frac{\partial}{\partial \theta_n}\left[\sqrt{g}\left(h_{n+1}\right)^{-2}\frac{\partial}{\partial \theta_n}\right]\nonumber\\
=\frac{1}{r^2\sin^2\theta_1\ldots\sin^2\theta_{n-1}\sin^{N-n-1}\theta_n}\frac{\partial}{\partial\theta_n}\left(\sin^{N-n-1}\theta_n\frac{\partial}{\partial\theta_n}\right);
\label{eq12d}
\end{align}
(v) for the last or ($N-2$)-th latitude,
\begin{align}
\frac{1}{\sqrt{g}}\frac{\partial}{\partial \theta_{N-2}}\left[\sqrt{g}\left(h_{N-1}\right)^{-2}\frac{\partial}{\partial \theta_{N-2}}\right]\nonumber\\
=\frac{1}{r^2\sin^2\theta_1\ldots\sin^2\theta_{N-3}\sin\theta_{N-2}}\frac{\partial}{\partial\theta_{N-2}}\left(\sin\theta_{N-2}\frac{\partial}{\partial\theta_{N-2}}\right);
\label{eq12e}
\end{align}
(vi) for the longitude,
\begin{equation}
\frac{1}{\sqrt{g}}\frac{\partial}{\partial\phi}\left[\sqrt{g}\left(h_N\right)^{-2}\frac{\partial}{\partial\phi}\right]=\frac{1}{r^2\sin^2\theta_1\ldots\sin^2\theta_{N-2}}\frac{\partial^2}{\partial\phi^2},
\label{eq12f}
\end{equation}
\end{subequations}
which is again familiar for polar ($N=2$) and spherical ($N=3$) coordinates.

Substitution of the equations \eqref{eq12a} to \eqref{eq12f} in \eqref{eq11b} proves that the Laplacian in hyperspherical coordinates is given by
\begin{align}
\nabla^2-\frac{\partial^2}{\partial r^2}-\frac{N-1}{r}\frac{\partial}{\partial r}&=\nabla^2-\frac{1}{r^{N-1}}\frac{\partial}{\partial r}\left(r^{N-1}\frac{\partial}{\partial r}\right)\nonumber\\
&=\frac{1}{r^2\sin^{N-2}\theta_1}\frac{\partial}{\partial\theta_1}\left(\sin^{N-2}\theta_1\frac{\partial}{\partial\theta_1}\right)\nonumber\\
&+\frac{1}{r^2\sin^{2}\theta_1\sin^{N-3}\theta_2}\frac{\partial}{\partial\theta_2}\left(\sin^{N-3}\theta_2\frac{\partial}{\partial\theta_2}\right)+\ldots\nonumber\\
&+\frac{1}{r^2\sin^{2}\theta_1\ldots\sin^{2}\theta_{n-1}\sin^{N-n-1}\theta_n}\frac{\partial}{\partial\theta_n}\left(\sin^{N-n-1}\theta_n\frac{\partial}{\partial\theta_n}\right)+\ldots\nonumber\\
&+\frac{1}{r^2\sin^{2}\theta_1\ldots\sin^2\theta_{N-3}\sin\theta_{N-2}}\frac{\partial}{\partial\theta_{N-2}}\left(\sin\theta_{N-2}\frac{\partial}{\partial\theta_{N-2}}\right)\nonumber\\
&+\frac{1}{r^2\sin^2\theta_1\ldots\sin^2\theta_{N-2}}\frac{\partial^2}{\partial\phi^2}.
\label{eq9}
\end{align}
The last term on the left-hand side (l.h.s.) and the last term on the right-hand side (r.h.s.) of \eqref{eq9}, substituting $N=2$, coincide with the well-known formula
\begin{subequations}
\begin{equation}
\nabla^2-\frac{\partial^2}{\partial r^2}-\frac{1}{r}\frac{\partial}{\partial r}=\nabla^2-\frac{1}{r}\frac{\partial}{\partial r}\left(r\frac{\partial}{\partial r}\right)=\frac{1}{r^2}\frac{\partial^2}{\partial\phi^2}
\label{eq10a}
\end{equation}
for the Laplacian in polar coordinates \cite{Abramowitz1965}. Besides, substituting $N=3$, the last term on the l.h.s. and the last two terms on the r.h.s. of \eqref{eq9} coincide with the formula
\begin{align}
\nabla^2-\frac{\partial^2}{\partial r^2}-\frac{2}{r}\frac{\partial}{\partial r}&=\nabla^2-\frac{1}{r^2}\frac{\partial}{\partial r}\left(r^2\frac{\partial}{\partial r}\right)\nonumber\\
&=\frac{1}{r^2\sin\theta}\frac{\partial}{\partial\theta}\left(\sin\theta\frac{\partial}{\partial\theta}\right)+\frac{1}{r^2\sin^2\theta}\frac{\partial^2}{\partial\phi^2}.
\label{eq10b}
\end{align}
\end{subequations}
for the Laplacian in spherical coordinates \cite{Abramowitz1965}.

\section{Solution of the Laplace, Helmholtz, wave and heat equations}
\label{sec:3}

Many physical phenomena and engineering processes are described by differential equations that may be linear or non-linear. The non-linear differential equations may be linearised by considering small perturbations of a mean state, for example, water waves in a channel \cite{Green1837} or the acoustics of horns \cite{Rayleigh1916}. The linear differential equations have constant coefficients in a steady homogeneous medium using Cartesian coordinates, and otherwise have variable coefficients, that is: (i) for an inhomogeneous medium, even in one dimension; (ii) for an homogeneous medium using curvilinear coordinates. An example of (ii) is the generalized isotropic equation of mathematical physics whose the spatial dependence is specified by the Laplacian operator in hyperspherical coordinates.

The Laplacian appears in several of the most important equations of mathematical physics, such as the Laplace, Helmholtz, wave, heat, Schr\"{o}dinger, telegraph, Klein-Gordon equations and their combinations. A partial differential equation combining all these (subsection \ref{sec:3.1}) is solved by separation of variables in hyperspherical coordinates (subsection \ref{sec:3.2}) involving: (i) exponential functions for the dependencies on time and longitude; (ii) Bessel functions extending spherical Bessel functions for the dependence on the radius; (iii) the dependence on the last latitude that is specified by an associated Legendre function as in spherical harmonics; (iv) the dependence on the remaining $N-3$ latitudes that specifies an extension to hyperspherical associated Legendre functions (subsection \ref{sec:3.3}).

\subsection{A general equation of mathematical physics}
\label{sec:3.1}

The equation of mathematical physics is defined as a linear differential operator in space-time, with: (i) spatial dependence specified by the Laplace operator; (ii) time dependence specified by a linear differential operator of time with constant coefficients. The later operator (ii) is usually of the second-order ($L=2$), but can be expanded to any order, with solution by separation of variables being possible in both cases.

The original equation of mathematical physics is a linear differential equation with spatial dependence specified by the Laplace operator and temporal dependence as a linear second-order differential operator with constant coefficients:
\begin{subequations}
\begin{equation}
\nabla^2F=\frac{1}{c^2}\frac{\partial^2 F}{\partial t^2}+\frac{1}{b}\frac{\partial F}{\partial t}+aF.
\label{eq13a}
\end{equation}
The equation of mathematical physics \eqref{eq13a} includes: (i) the Laplace equation if the r.h.s. is zero; (ii) the wave equation with propagation speed $c$ for the first term on the r.h.s.; (iii) the heat equation with diffusivity $b$ for the second term on the r.h.s.; (iv) the Helmholtz equation with parameter $a$ for the third term on the r.h.s.; (v) combinations of the preceding, such as the telegraph equation consisting of the l.h.s. plus the three terms on the r.h.s. and the Klein-Gordon equation involving, besides the l.h.s., the first and third terms on the r.h.s..

The extended equation of mathematical physics
\begin{equation}
\nabla^2 F=\sum_{l=0}^L A_l\frac{\partial^l F}{\partial t^l}
\label{eq13b}
\end{equation}
\end{subequations}
is a linear partial differential equation in space-time with constant coefficients where: (i) the temporal part is a linear partial differential operator with constant coefficients of any order; (ii) the spatial part is the Laplacian that has second-order derivatives.

All these equations can be reduced to an Helmholtz equation if the Fourier transform in time is performed. This reduction is reviewed in the appendix \ref{sec:appA} and it is useful to introduce the solutions in hypercylindrical coordinates, determined in the appendix \ref{sec:appB}. The comparison between the solutions in both systems of coordinates are also detailed in the appendix \ref{sec:appB}. After the transformation, the spatial dependence specified by the Laplace operator remains constant and therefore the only difference between the equations of mathematical physics in time domain or in frequency domain is related to the temporal dependence. Consequently, the comparison of the solutions between hyperspherical and hypercylindrical coordinates can be made in different domains of time, since the system of coordinates influences only the spatial part of the solutions and the temporal part of the solutions is uniquely related to the time/frequency domains. It means that using different systems of coordinates influences only the spatial part of the solutions.

Using the Laplacian in hyperspherical coordinates \eqref{eq9}, the original equation of mathematical physics \eqref{eq13a}, also in hyperspherical coordinates, takes the form
\begin{subequations}
\begin{align}
r^2\left(aF+\frac{1}{b}\frac{\partial F}{\partial t}+\frac{1}{c^2}\frac{\partial^2 F}{\partial t^2}\right)&=r^2\nabla^2F=r^2\left(\frac{\partial^2F}{\partial r^2}+\frac{N-1}{r}\frac{\partial F}{\partial r}\right)\nonumber\\
&+\csc^{N-2}\theta_1\frac{\partial}{\partial\theta_1}\left(\sin^{N-2}\theta_1\frac{\partial F}{\partial\theta_1}\right)\nonumber\\
&+\csc^2\theta_1\left[\csc^{N-3}\theta_2\frac{\partial}{\partial\theta_2}\left(\sin^{N-3}\theta_2\frac{\partial F}{\partial\theta_2}\right)\right.\nonumber\\
&+\csc^2\theta_2\left[\csc^{N-4}\theta_3\frac{\partial}{\partial\theta_3}\left(\sin^{N-4}\theta_3\frac{\partial F}{\partial\theta_3}\right)+\ldots\right.\nonumber\\
&+\csc^2\theta_{n-1}\left[\csc^{N-n-1}\theta_n\frac{\partial}{\partial\theta_n}\left(\sin^{N-n-1}\theta_n\frac{\partial F}{\partial\theta_n}\right)+\ldots\right.\nonumber\\
&+\csc^2\theta_{N-3}\left[\csc\theta_{N-2}\frac{\partial}{\partial\theta_{N-2}}\left(\sin\theta_{N-2}\frac{\partial F}{\partial\theta_{N-2}}\right)\right.\nonumber\\
&+\csc^2\theta_{N-2}\left.\frac{\partial^2 F}{\partial\phi^2}\right]\ldots\left.\vphantom{\frac{\partial^2 F}{\partial\phi^2}}\right]\ldots\left.\left.\vphantom{\frac{\partial^2 F}{\partial\phi^2}}\right]\right].
\label{eq14a}
\end{align}
The extended equation of mathematical physics in hyperspherical coordinates takes the same form as \eqref{eq14a}, but the first member becomes
\begin{equation}
r^2\sum_{l=0}^L A_l\frac{\partial^l F}{\partial t^l}=r^2\nabla^2F.
\label{eq14b}
\end{equation}
\end{subequations}
The Laplacian operator in hyperspherical coordinates has been written in \eqref{eq13a} in a ``nested form'', taking factors out of the brackets as early as possible. This facilitates the solution by separation of variables, as shown next.

The solution of the original and extended equations of mathematical physics in hyperspherical coordinates is obtained by separation of variables
\begin{equation}
F\left(r, \theta_1, \ldots, \theta_{N-2}, \phi, t\right)=T\left(t\right)\,R\left(r\right)\,\Phi\left(\phi\right)\prod_{n=1}^{N-2}\Theta_n\left(\theta_n\right)
\label{eq15}
\end{equation}
with substitution in \eqref{eq14a} and division by $F$ leading to
\begin{subequations}
\begin{align}
r^2\left(a+\frac{1}{b}\frac{1}{T}\frac{\mathrm{d}T}{\mathrm{d}t}+\frac{1}{c^2}\frac{1}{T}\frac{\mathrm{d}^2T}{\mathrm{d}t^2}\right)&=\frac{r^2}{R}\left(\frac{\mathrm{d}^2R}{\mathrm{d}r^2}+\frac{N-1}{r}\frac{\mathrm{d}R}{\mathrm{d}r}\right)\nonumber\\
&+\frac{1}{\Theta_1}\frac{\mathrm{d}^2\Theta_1}{\mathrm{d}\theta_1^2}+\left(N-2\right)\cot\theta_1\frac{1}{\Theta_1}\frac{\mathrm{d}\Theta_1}{\mathrm{d}\theta_1}\nonumber\\
&+\csc^2\theta_1\left[\frac{1}{\Theta_2}\frac{\mathrm{d}^2\Theta_2}{\mathrm{d}\theta_2^2}+\left(N-3\right)\cot\theta_2\frac{1}{\Theta_2}\frac{\mathrm{d}\Theta_2}{\mathrm{d}\theta_2}\right.\nonumber\\
&+\csc^2\theta_2\left[\frac{1}{\Theta_3}\frac{\mathrm{d}^2\Theta_3}{\mathrm{d}\theta_3^2}+\left(N-4\right)\cot\theta_3\frac{1}{\Theta_3}\frac{\mathrm{d}\Theta_3}{\mathrm{d}\theta_3}+\ldots\right.\nonumber\\
&+\csc^2\theta_{n-1}\left[\frac{1}{\Theta_n}\frac{\mathrm{d}^2\Theta_n}{\mathrm{d}\theta_n^2}+\left(N-n-1\right)\cot\theta_n\frac{1}{\Theta_n}\frac{\mathrm{d}\Theta_n}{\mathrm{d}\theta_n}+\ldots\right.\nonumber\\
&+\csc^2\theta_{N-3}\left[\frac{1}{\Theta_{N-2}}\frac{\mathrm{d}^2\Theta_{N-2}}{\mathrm{d}\theta_{N-2}^2}+\cot\theta_{N-2}\frac{1}{\Theta_{N-2}}\frac{\mathrm{d}\Theta_{N-2}}{\mathrm{d}\theta_{N-2}}\right.\nonumber\\
&+\csc^2\theta_{N-2}\frac{1}{\Phi}\frac{\mathrm{d}^2\Phi}{\mathrm{d}\phi^2}\left.\vphantom{\frac{\mathrm{d}^2\Theta_{N-2}}{\mathrm{d}\theta_{N-2}^2}}\right]\ldots\left.\vphantom{\frac{\mathrm{d}^2\Theta_{N-2}}{\mathrm{d}\theta_{N-2}^2}}\right]\ldots\left.\left.\vphantom{\frac{\mathrm{d}^2\Theta_{N-2}}{\mathrm{d}\theta_{N-2}^2}}\right]\right]
\label{eq16}
\end{align}
separating the variables $\left(t, r, \theta_1, \theta_2, \ldots, \theta_n, \ldots, \theta_{N-2}, \phi\right)$ as much as possible. If the substitution was made in \eqref{eq14b}, the result would have been similar to \eqref{eq15}, but the l.h.s. would be
\begin{equation}
r^2\left(A_0+\sum_{l=1}^L A_l\frac{1}{T}\frac{\mathrm{d}^l T}{\mathrm{d} t^l}\right).
\label{eq16'}
\end{equation}
\end{subequations}

\subsection{Separation of variables and a set of $N+1$ ordinary differential equations}
\label{sec:3.2}

The solution of the original \eqref{eq13a} and extended \eqref{eq13b} equations of mathematical physics in hyperspherical coordinates by separation of variables, stated in \eqref{eq14a}, leads to a set of $N+1$ ordinary differential equations considered next.

The equation \eqref{eq15} leads to a set of $N+1$ ordinary differential equations one for each factor in \eqref{eq16} because: (i) the r.h.s. of \eqref{eq16} does not depend on time, so the term in curved brackets on the l.h.s. must be a constant, namely the symmetric of the radial wavenumber $k^2$, with
\begin{subequations}
\begin{equation}
\frac{1}{c^2}\frac{\mathrm{d}^2T}{\mathrm{d} t^2}+\frac{1}{b}\frac{\mathrm{d} T}{\mathrm{d} t}+\left(a+k^2\right)T=0
\label{eq17a}
\end{equation}
for the original equation of mathematical physics, and
\begin{equation}
\sum_{l=1}^L A_l\frac{\mathrm{d}^l T}{\mathrm{d} t^l}+\left(A_0+k^2\right)T=0
\label{eq17a.1}
\end{equation}
for the extended equation; (ii) the first term on the r.h.s. of \eqref{eq16} is the only one depending on the radius, so, additionally with the l.h.s., it is also a constant, and denoting that constant by $q\left(q+1\right)$ leads to
\begin{equation}
r^2\frac{\mathrm{d}^2R}{\mathrm{d} r^2}+\left(N-1\right)r\frac{\mathrm{d} R}{\mathrm{d} r}+\left[k^2r^2-q\left(q+1\right)\right]R=0;
\label{eq17b}
\end{equation}
(iii) the last factor on the r.h.s. of \eqref{eq16} is the only one depending on the longitude, so it must be a constant, equal to $-m^2$ where $m$ is the azimuthally wavenumber, leading to
\begin{equation}
\frac{\mathrm{d}^2\Phi}{\mathrm{d}\phi^2}+m^2\Phi=0;
\label{eq17c}
\end{equation}
(iv) the last latitude $\theta_{N-2}$ appears only in the last two terms on the r.h.s. of \eqref{eq16} and must be a constant leading to
\begin{align}
\frac{1}{\Theta_{N-2}}\frac{\mathrm{d}^2\Theta_{N-2}}{\mathrm{d}\theta^2_{N-2}}&+\cot\theta_{N-2}\frac{1}{\Theta_{N-2}}\frac{\mathrm{d}\Theta_{N-2}}{\mathrm{d}\theta_{N-2}}\nonumber\\
&+\csc^2\theta_{N-2}\frac{1}{\Phi}\frac{\mathrm{d}^2\Phi}{\mathrm{d}\phi^2}=-q_{N-2}\left(1+q_{N-2}\right)
\label{eq17d}
\end{align}
which on account of \eqref{eq17c} is equivalent to
\begin{align}
\frac{\mathrm{d}^2\Theta_{N-2}}{\mathrm{d}\theta^2_{N-2}}&+\cot\theta_{N-2}\frac{\mathrm{d}\Theta_{N-2}}{\mathrm{d}\theta_{N-2}}\nonumber\\
&+\left[q_{N-2}\left(1+q_{N-2}\right)-m^2\csc^2\theta_{N-2}\right]\Theta_{N-2}=0;
\label{eq17e}
\end{align}
(v) a similar reasoning of (iv) leads, for $\theta_{N-3}$, to
\begin{align}
\frac{\mathrm{d}^2\Theta_{N-3}}{\mathrm{d}\theta^2_{N-3}}&+2\cot\theta_{N-3}\frac{\mathrm{d}\Theta_{N-3}}{\mathrm{d}\theta_{N-3}}\nonumber\\
&+\left[q_{N-3}\left(1+q_{N-3}\right)-q_{N-2}\left(1+q_{N-2}\right)\csc^2\theta_{N-3}\right]\Theta_{N-3}=0;
\label{eq17f}
\end{align}
(vi) the corresponding ordinary differential equation for $\theta_{N-n}$, with $n=3,\ldots,N-1$, is
\begin{align}
\frac{\mathrm{d}^2\Theta_{N-n}}{\mathrm{d}\theta^2_{N-n}}&+\left(n-1\right)\cot\theta_{N-n}\frac{\mathrm{d}\Theta_{N-n}}{\mathrm{d}\theta_{N-n}}\nonumber\\
&+\left[q_{N-n}\left(1+q_{N-n}\right)-q_{N-n+1}\left(1+q_{N-n+1}\right)\csc^2\theta_{N-n}\right]\Theta_{N-n}=0;
\label{eq17g}
\end{align}
\end{subequations}
(vii) the constants introduced at each stage of \eqref{eq16} lead to the equality $q_1=q$. In the case of spherical coordinates, with $N=3$, there is only one latitude and therefore the equations \eqref{eq17f} and \eqref{eq17g} do not exist. The only latitude in that case is $\theta_{N-2}=\theta_1$ (equal to $\theta$ in the appendix \ref{sec:appA.3}) satisfying \eqref{eq17e} that is equivalent to \eqref{eq8.359b}.

\subsection{Dependencies on longitude, time, radius and latitudes}
\label{sec:3.3}

The solution of the preceding set of $N+1$ ordinary differential equations involves 3 known functions and a new differential equation.

The simplest ordinary differential equation \eqref{eq17c} is for longitude, and specifies two sinusoids for $0\leq\phi\leq2\pi$ and for all non-negative integers $m$:
\begin{equation}
\Phi\left(\phi\right)=\exp\left(\pm\mathrm{i} m\phi\right).
\label{eq18}
\end{equation}
The dependence on time, stated in \eqref{eq17a} or \eqref{eq17a.1}, are two exponential functions,
\begin{subequations}
\begin{equation}
T\left(t\right)=\exp\left(\pm\mathrm{i}\omega t\right),
\label{eq19a}
\end{equation}
with frequency $\omega$ satisfying the dispersion relation
\begin{equation}
k^2=\left(\frac{\omega}{c}\right)^2-a\mp\frac{\mathrm{i}\omega}{b}
\label{eq19b}
\end{equation}
for the original equation of mathematical physics, and depending on the sign of the exponential in \eqref{eq19a}, or
\begin{equation}
k^2=-A_0-\sum_{l=1}^L A_l\left(\pm\mathrm{i}\omega\right)^l
\label{eq19b'}
\end{equation}
\end{subequations}
for the extended equation, also depending on the sign in \eqref{eq19a}. It involves the radial wavenumber $k$, which also appears in the radial dependence \eqref{eq17b}. 

The radial dependence is a cylindrical Bessel equation of order $\sigma^2=q\left(q+1\right)$ for $N=2$. Otherwise, the radial dependence is specified by a spherical Bessel equation \eqref{eq8.361b} for $N=3$, that is reducible to a cylindrical form \eqref{eq8.353b} via a change of dependent variable that involves multiplication by a factor $1/\sqrt{r}$. The factor can be interpreted as $r^{1-N/2}$ when $N=3$. This suggests the change of dependent variable
\begin{subequations}
\begin{equation}
R\left(r\right)=r^{1-N/2}S\left(r\right)
\label{eq20a}
\end{equation}
in the $N$-dimensional case that transforms to a Bessel equation
\begin{equation}
r^2S''+rS'+\left[k^2r^2-\left(\frac{N}{2}-1\right)^2-q\left(q+1\right)\right]S=0
\label{eq20b}
\end{equation}
\end{subequations}
of order $\sigma$ specified by
\begin{subequations}
\begin{equation}
\sigma^2=q\left(q+1\right)+\left(\frac{N}{2}-1\right)^2;
\label{eq21a}
\end{equation}
thus the radial dependence is specified by
\begin{equation}
R\left(r\right)=r^{1-N/2}Z_\sigma\left(kr\right)
\label{eq21b}
\end{equation}
\end{subequations}
in terms of a linear combination of Bessel, Neumann or Hankel functions:
\begin{equation}
Z_\sigma\left(kr\right)\equiv A_1J_\sigma\left(kr\right)+A_2Y_\sigma\left(kr\right)=A_+H_\sigma^{\left(1\right)}\left(kr\right)+A_-H_\sigma^{\left(2\right)}\left(kr\right).
\label{eq22}
\end{equation}
In two dimensions, $N=2$, the solution is a cylindrical Bessel function of order $\sigma^2=q\left(q+1\right)$. In three dimensions, $N=3$, from \eqref{eq21a} follows $\sigma=q+1/2$ leading to spherical Bessel functions $r^{-1/2}Z_{q+1/2}\left(kr\right)$ in \eqref{eq21b}. The agreement of \eqref{eq8.362} with \eqref{eq22}, for $N=3$,  shows that the spherical \eqref{eq8.362} and cylindrical \eqref{eq22} Bessel and Neumann functions are related by
\begin{subequations}
\begin{align}
j_q\left(kr\right)&=\sqrt{\frac{\pi}{2r}}J_{q+1/2}\left(kr\right), \label{eq8.387a}\\
y_q\left(kr\right)&=\sqrt{\frac{\pi}{2r}}Y_{q+1/2}\left(kr\right),
\label{eq8.387b}
\end{align}
\end{subequations}
where the constant factor $\sqrt{\pi/2}$ was inserted for agreement with the literature and can be absorbed into the arbitrary constants $A_+$ and $A_-$. For all higher dimensions, $N=4, 5,\ldots$, then \eqref{eq21a} specifies the order of the Bessel function \eqref{eq22} in the radial solution \eqref{eq21b}. The radial dependence of the solution for $N$ dimensions in terms of Bessel and Neumann functions \eqref{eq21b} simplifies in two dimensions to cylindrical Bessel and Neumann functions \eqref{eq8.354b} of order $\sqrt{q\left(q+1\right)}$, but variable $kr$, while in three dimensions the solution reduces to spherical Bessel and Neumann functions \eqref{eq8.362} of order $q+1/2$.

The last latitude \eqref{eq17e} satisfies an associated Legendre equation,
\begin{equation}
\Theta_{N-2}\left(\theta_{N-2}\right)=P^m_{q_{N-2}}\left(\cos\theta_{N-2}\right),
\label{eq23}
\end{equation}
as for spherical harmonics, stated in \eqref{eq8.359b}. In the case $N=3$, then $q_{N-2}=q_1=q$ in \eqref{eq17b}, leading to associated Legendre polynomials equivalent to \eqref{eq8.360}. In higher dimensions, $N=4, 5, \ldots$, besides $q_{N-2}$, there are the constants $q_{N-3}, \ldots, q_1$ appearing in \eqref{eq17g}, which lead to a generalized or hyperspherical associated Legendre function
\begin{subequations}
\begin{equation}
G\left(\theta\right)\equiv P^\mu_{\nu, \lambda}\left(\cos\theta\right)
\label{eq24a}
\end{equation}
defined by the solution of the corresponding differential equation. 

The hyperspherical associated Legendre functions are therefore defined as solutions of the differential equation
\begin{equation}
\frac{\mathrm{d}^2G}{\mathrm{d}\theta^2}+\left(1+2\lambda\right)\cot\theta\frac{\mathrm{d} G}{\mathrm{d}\theta}+\left[\nu\left(\nu+1\right)-\mu^2\csc^2\theta\right]G\left(\theta\right)=0.
\label{eq24b}
\end{equation}
\end{subequations}
The original associated Legendre differential equation is the particular case $\lambda=0$ of \eqref{eq24b}. The dependence on the ($N-n$)-th latitude, according to \eqref{eq17g}, that is, with $n=3, \ldots, N-1$, is of the type \eqref{eq24b}, which is specified by an hyperspherical associated Legendre function
\begin{subequations}
\begin{equation}
\Theta_{N-n}\left(\theta_{N-n}\right)=P^{\mu_n}_{\nu_{n}, \lambda_{n}}\left(\cos\theta_{N-n}\right)
\label{eq25d}
\end{equation}
involving the parameters
\begin{align}
\nu_{n}&=q_{N-n}, \label{eq25a}\\
\mu_n&=\left|
q_{N-n+1}\left(1+q_{N-n+1}\right)
\right|^{1/2}, \label{eq25b}\\
2\lambda_{n}&=n-2. \label{eq25c}
\end{align}
\end{subequations}

\section{General solution in terms of hyperspherical associated Legendre functions}
\label{sec:4}

The solution of the general equation of mathematical physics is summarized next in $N$ dimensions (subsection \ref{sec:4.1}), and reviewed in the four-dimensional case (subsection \ref{sec:4.2}). The latter involves the first of the hyperspherical associated Legendre functions, which can be defined for any dimension (subsection \ref{sec:4.3}).

\subsection{Solution of the general equation of mathematical physics}
\label{sec:4.1}

Remembering the equations \eqref{eq15} and \eqref{eq17b} to \eqref{eq17g}, the solution of the original or extended equation of mathematical physics in hyperspherical coordinates \eqref{eq14b}, using the method separation of variables, is given by
\begin{align}
F\left(r, \theta_1, \ldots, \theta_{N-2}, \phi, t\right)&=r^{1-N/2}\sum_{m=-\infty}^{+\infty}\mathrm{e}^{\mathrm{i}\left(m\phi-\omega t\right)}\sum_{q_{1}=1}^\infty Z_\sigma\left(kr\right)\nonumber\\
&\times\sum_{q_{N-2}=1}^\infty P_{q_{N-2}}^m\left(\cos\theta_{N-2}\right)\sum_{q_{N-3},\ldots,q_{1}=1}^\infty\prod_{n=3}^{N-1}P_{q_{N-n}, \lambda_{n}}^{\mu_{n}}\left(\cos\theta_{N-n}\right)
\label{eq26}
\end{align}
as a product of the following factors: (i) the radial dependence with amplitude
\begin{subequations}
\begin{equation}
A\sim r^{1-N/2}
\label{eq27a}
\end{equation}
leads to a flux
\begin{equation}
A^2r^{N-1}\sim\mathrm{const.}
\label{eq27b}
\end{equation}
\end{subequations}
from \eqref{eq21b} through an hypersphere of radius $R$ which is independent of the radius, hence a constant; (ii) the time dependence \eqref{eq19a} involves a frequency $\omega$, which is a root of the dispersion relation \eqref{eq19b} or \eqref{eq19b'} for original or extended equation respectively, and hence may be complex, allowing not only sinusoidal oscillations if $\Re\left(\omega\right)\neq0$, but also decays if $\Im\left(\omega\right)<0$ or grows if $\Im\left(\omega\right)>0$ with time; (iii) the dependence on longitude \eqref{eq18} is sinusoidal with integer wavenumber $m$; (iv) the radial dependence involves besides the constant flux factor \eqref{eq27a}, the Bessel functions \eqref{eq22} of order $\sigma$ satisfying \eqref{eq21a} and involving $q=q_1$; (v) the latter is the first of fundamental wavenumbers $\left(q_1,\ldots,q_{N-2}\right)$ appearing in \eqref{eq17c} to \eqref{eq17g}; (vi) the last latitude $\theta_{N-2}$ is specified by an associated Legendre function \eqref{eq23}, so the wavenumber $q_{N-2}$ is a positive integer if the direction $\theta_{N-2}=\pi$ is included; (vii) the remaining $N-3$ latitudes each appear as a factor specified by hyperspherical associated Legendre functions \eqref{eq25d}, with \eqref{eq25a} as the degree, \eqref{eq25b} as the order and \eqref{eq25c} as the dimension; (vii) the first latitude $\theta_1$ corresponds to the order $q_1=q$ and to the hyperspherical associated Legendre function of variable $\theta_1$, degree $q$, order $\mu_{N-1}=\left|
q_2\left(1+q_2\right)
\right|^{1/2}$ and dimension $\left(N-3\right)/2$. The solution in hyperspherical coordinates of the Helmholtz equation, using the Fourier transform in time, and its comparison with the hypercylindrical coordinates can be reviewed in the appendix \ref{sec:appB.4}.

The four-dimensional case is reviewed in the next subsection before proceeding to consider the main new feature, namely the hyperspherical associated Legendre functions. The four-dimensional case in hypercylindrical coordinates can be reviewed in the appendix \ref{sec:appB.3}.

\subsection{Particular case of four-dimensional harmonics}
\label{sec:4.2}

The simplest case beyond spherical harmonics is four-dimensional and is reviewed combining all preceding results in this subsection. 

The four-dimensional hyperspherical coordinates $\left(r, \psi, \theta, \phi\right)$ are related to the Cartesian coordinates $\left(x_1, x_2, x_3, x_4\right)\equiv\left(z, w, x, y\right)$ through the equations \eqref{eq2} leading to
\begin{align}
0\leq r<\infty, \quad 0\leq\psi, \theta\leq\pi, \quad 0\leq\phi\leq2\pi: \nonumber\\ \left\lbrace z, w, x, y\right\rbrace=r\left\lbrace\cos\psi, \sin\psi\cos\theta, \sin\psi\sin\theta\cos\phi, \sin\psi\sin\theta\sin\phi\right\rbrace. \label{eq28}
\end{align}
The inverse coordinate transformation from four-dimension Cartesian to hyperspherical coordinates, using \eqref{eq3}, is
\begin{equation}
\begin{aligned}
r&=\left|z^2+w^2+x^2+y^2\right|^{1/2},\\
\cot\psi&=z\left|
w^2+x^2+y^2
\right|^{-1/2},\\
\cot\theta&=w\left|x^2+y^2\right|^{-1/2},\\
\cot\phi&=x/y.
\end{aligned}
\label{eq29}
\end{equation}
The hyperspherical base vectors, defined in \eqref{eq4}, are
\begin{equation}
\begin{aligned}
\overrightarrow{e}_r&=\left\lbrace\cos\psi, \sin\psi\cos\theta, \sin\psi\sin\theta\cos\phi, \sin\psi\sin\theta\sin\phi\right\rbrace,\\
\overrightarrow{e}_\psi&=r\left\lbrace-\sin\psi, \cos\psi\cos\theta, \cos\psi\sin\theta\cos\phi, \cos\psi\sin\theta\sin\phi\right\rbrace,\\
\overrightarrow{e}_\theta&=r\sin\psi\left\lbrace 0, -\sin\theta, \cos\theta\cos\phi, \cos\theta\sin\phi\right\rbrace,\\
\overrightarrow{e}_\phi&=r\sin\psi\sin\theta\left\lbrace 0, 0, -\sin\phi, \cos\phi\right\rbrace.
\end{aligned}
\label{eq30}
\end{equation}
They are mutually orthogonal and specify the scale factors \eqref{eq6} resulting in
\begin{subequations}
\begin{equation}
\begin{aligned}
h_r=1, \quad h_\psi=r, \quad h_\theta=r\sin\psi, \quad h_\phi=r\sin\psi\sin\theta,
\end{aligned}
\label{eq31a-d}
\end{equation}
as well as the determinant of the covariant metric tensor 
\begin{equation}
\left|g\right|^{1/2}=r^3\sin^2\psi\sin\theta
\label{eq31e}
\end{equation}
and the volume element
\begin{equation}
\mathrm{d} V=\left|g\right|^{1/2}\mathrm{d} r\,\mathrm{d}\psi\,\mathrm{d}\theta\,\mathrm{d}\phi.
\label{eq31f}
\end{equation}
\end{subequations}
The four-dimensional Laplacian in hyperspherical coordinates, knowing \eqref{eq9}, is
\begin{align}
\nabla^2&=\frac{1}{r^3}\frac{\partial}{\partial r}\left(r^3\frac{\partial}{\partial r}\right)+\frac{1}{r^2\sin^2\psi}\frac{\partial}{\partial\psi}\left(\sin^2\psi\frac{\partial}{\partial\psi}\right)\nonumber\\
&+\frac{1}{r^2\sin^2\psi\sin\theta}\frac{\partial}{\partial\theta}\left(\sin\theta\frac{\partial}{\partial\theta}\right)+\frac{1}{r^2\sin^2\psi\sin^2\theta}\frac{\partial^2}{\partial\phi^2}.
\label{eq32}
\end{align}
The original and general equations of mathematical physics, \eqref{eq13a} and \eqref{eq13b} respectively, have the solution by separation of variables \eqref{eq15},
\begin{equation}
F\left(r, \psi, \theta, \phi, t\right)=T\left(t\right)\,R\left(r\right)\,\Psi\left(\psi\right)\,\Theta\left(\theta\right)\,\Phi\left(\phi\right),
\label{eq33}
\end{equation} 
where
\begin{align}
r^2\left(a+\frac{1}{b}\frac{1}{T}\frac{\mathrm{d} T}{\mathrm{d} t}+\frac{1}{c^2}\frac{1}{T}\frac{\mathrm{d}^2 T}{\mathrm{d} t^2}\right)&=\frac{r^2}{R}\left(\frac{\mathrm{d}^2 R}{\mathrm{d} r^2}+\frac{3}{r}\frac{\mathrm{d} R}{\mathrm{d} r}\right)+\frac{1}{\Psi}\left(\frac{\mathrm{d}^2\Psi}{\mathrm{d}\psi^2}+2\cot\psi\frac{\mathrm{d}\Psi}{\mathrm{d}\psi}\right)\nonumber\\
&+\csc^2\psi\left[\frac{1}{\Theta}\left(\frac{\mathrm{d}^2\Theta}{\mathrm{d}\theta^2}+\cot\theta\frac{\mathrm{d}\Theta}{\mathrm{d}\theta}\right)+\csc^2\theta\frac{1}{\Phi}\frac{\mathrm{d}^2\Phi}{\mathrm{d}\phi^2}\right]
\label{eq34}
\end{align}
with respect to the original equation. This leads to a set of five ordinary differential equations specifying the dependence, in particular, on the time through \eqref{eq19a} and on the longitude through \eqref{eq18}. The dependence on the radius is specified by \eqref{eq17b}, in this case by
\begin{subequations}
\begin{equation}
r^2\frac{\mathrm{d}^2 R}{\mathrm{d} r^2}+3r\frac{\mathrm{d} R}{\mathrm{d} r}+\left[k^2r^2-q\left(q+1\right)\right]R=0
\label{eq35a}
\end{equation}
whose solution is specified by Bessel functions \eqref{eq21b} and \eqref{eq22},
\begin{equation}
R\left(r\right)=\frac{1}{r}Z_\sigma\left(kr\right),
\label{eq35b}
\end{equation}
of order $\sigma$ with
\begin{equation}
\sigma^2=q^2+q+1.
\label{eq35c}
\end{equation}
\end{subequations}
The dependence on the last latitude, regarding \eqref{eq23}, with $s\equiv q_2$, is 
\begin{subequations}
\begin{equation}
\Theta\left(\theta\right)=P_s^m\left(\cos\theta\right)
\label{eq36a}
\end{equation}
satisfying the differential equation
\begin{equation}
\frac{\mathrm{d}^2\Theta}{\mathrm{d}\theta^2}+\cot\theta\frac{\mathrm{d}\Theta}{\mathrm{d}\theta}+\left[s\left(s+1\right)-m^2\csc^2\theta\right]\Theta=0.
\label{eq36b}
\end{equation}
\end{subequations}
Lastly, the dependence on the first latitude, substituting $n=N-1$ in \eqref{eq25d} to \eqref{eq25c}, with $q=q_1$, satisfies
\begin{subequations}
\begin{equation}
\frac{\mathrm{d}^2\Psi}{\mathrm{d}\psi^2}+2\cot\psi\frac{\mathrm{d}\Psi}{\mathrm{d}\psi}+\left[q\left(q+1\right)-s\left(s+1\right)\csc^2\psi\right]\Psi=0
\label{eq37a}
\end{equation}
whose solution is
\begin{equation}
\Psi\left(\psi\right)=P_{q, 1/2}^{\mu_{N-1}}\left(\cos\psi\right)
\label{eq37c}
\end{equation}
with
\begin{equation}
\mu_{N-1}=\left|
s\left(s+1\right)
\right|^{1/2}.
\label{eq37b}
\end{equation}
\end{subequations}
The general solution is
\begin{align}
F\left(r, \psi, \theta, \phi, t\right)&=\frac{1}{r}\mathrm{e}^{-\mathrm{i}\omega t}\sum_{m=-\infty}^{+\infty}\mathrm{e}^{\mathrm{i} m\phi} \sum_{q=1}^{\infty} Z_\sigma\left(kr\right)\nonumber\\
&\times\sum_{s=1}^\infty P_s^m\left(\cos\theta\right)\sum_{q=1}^\infty P_{q, 1/2}^{\mu_{N-1}}\left(\cos\psi\right)
\label{eq38}
\end{align}
where the Bessel functions have order \eqref{eq35c}, the original associated Legendre polynomials specify the dependence on the second latitude $\theta$ and the dependence on the first latitude $\psi$ is specified by hyperspherical associated Legendre functions of dimension $1/2$ in \eqref{eq37c} with $\mu_{N-1}$ given by \eqref{eq37b}. For higher dimensions, $N=5, 6\ldots$, more hyperspherical associated Legendre functions appear.

The first latitude that corresponds to the hyperspherical associated Legendre functions is given by the product of the sine function and the hypergeoemtric function. The explicit relation is evaluated in the theorem \ref{thm3}. The function, from \eqref{eq25d}, depends on the parameters $\mu$, $\nu$ and $\lambda$. For four dimensions, $N$ is equal to $4$ and the first latitude corresponds to $n=3$. Therefore, there are two values that influence the hyperspherical associated Legendre function given by \eqref{eq37c} (the value of $\lambda$ is equal to $1/2$): $q_1=q$ and $q_2=s$. These two values, in that order, are indicated between curved parentheses at each plot of the figure \ref{fig0}. Assigning integer values, the figure \ref{fig0} shows several plots of the hyperspherical associated Legendre function for values of the first latitude $\psi$ between $0$ and $\pi$. Each plot represents the two independent solutions. The solid line is the first solution with the plus sign in the first equation of \eqref{eq56a-c} while the dashed line is the second solution with the negative sign. In some cases, such as $\left(q_1, q_2\right)=\left(1, 1\right)$ or $\left(q_1, q_2\right)=\left(2, 3\right)$ as sketched in the first row of plots in the figure \ref{fig0}, one parameter of the hypergeometric function ($\alpha$ or $\beta$ in \eqref{eq55b}) is zero and consequently the hypergeometric function reduces to unity and the function $\Psi$ is equal to the first term of \eqref{eq55b}. There are also cases such as the second line of plots in the figure \ref{fig0} corresponding to irrational values of the parameters $\alpha$ and $\beta$. Lastly, the third line of plots represents the functions $\Psi$ reduced to sine functions multiplied by polynomials because, in that cases, one upper parameter ($\alpha$ or $\beta$) of the hypergeometric function is a non-positive integer. All these plots show more three properties: they are symmetric with respect to $\psi=\pi/2$, they are equal to $1$ for $\psi=\pi/2$ and the functions $\Psi$ converge for $0<\theta<\pi$ and diverge for all other values, including the points $\psi=0$ and $\psi=\pi$, since the hypergeometric functions have two upper parameters ($\alpha$ and $\beta$) and one lower parameter ($\lambda$). 

\begin{figure}
\centering
\includegraphics[scale=0.9]{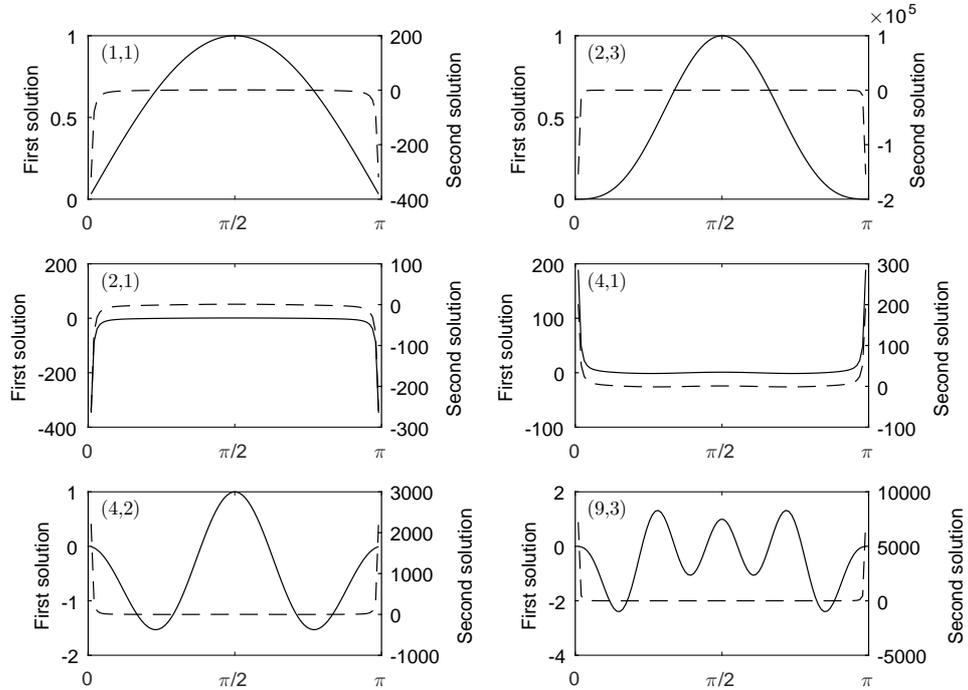}
\caption{Function $\Psi$ of the first latitude, between $0$ and $\pi$, for several integer values of $q$ and $s$.}
\label{fig0}
\end{figure}

For higher dimensions, there are more hyperspherical associated Legendre functions corresponding to additional latitudes. They differ from the case of four dimensions because the values of $\lambda$ are not equal to $1/2$, except for the latitude $\theta_{N-3}$. Nonetheless, the plot of these functions are similar to the ones of figure \ref{fig0}, but they diverge more quickly near the values $\theta=0$ and $\theta=\pi$.

The hyperspherical associated Legendre functions include as particular cases not only to the associated Legendre functions, but also to the hyperspherical Legendre functions considered in the next subsection.

\subsection{Hierarchy of related Legendre functions}
\label{sec:4.3}

The new feature of the solution of the original or general equations of mathematical physics in terms of hyperspherical coordinates is the appearance of the hyperspherical associated Legendre functions that can be related to the Gaussian hypergeometric functions (see theorem \ref{thm3}). This preceded by the separate consideration of hyperspherical Legendre functions (see lemma \ref{lem8}) and associated Legendre functions (see lemma \ref{lem9}). The associated Legendre functions are the particular case $\lambda=0$ of \eqref{eq24a} and \eqref{eq24b}, that is,
\begin{subequations} 
\begin{align}
X\left(\theta\right)\equiv P_{\nu, 0}^\mu\left(\cos\theta\right)=P_\nu^\mu\left(\cos\theta\right): \label{eq39a}\\
\frac{\mathrm{d}^2X}{\mathrm{d}\theta^2}+\cot\theta\frac{\mathrm{d} X}{\mathrm{d}\theta}+\left[\nu\left(\nu+1\right)-\mu^2\csc^2\theta\right]X=0. \label{eq39c}
\end{align}
\end{subequations}
Otherwise, the hyperspherical Legendre functions are the particular case $\mu=0$ of \eqref{eq24a} and \eqref{eq24b}, that is,
\begin{subequations} 
\begin{align}
Y\left(\theta\right)\equiv P_{\nu, \lambda}^0\left(\cos\theta\right)=P_{\nu, \lambda}\left(\cos\theta\right): \label{eq40a}\\
\frac{\mathrm{d}^2Y}{\mathrm{d}\theta^2}+\left(1+2\lambda\right)\cot\theta\frac{\mathrm{d} Y}{\mathrm{d}\theta}+\nu\left(\nu+1\right)Y=0. \label{eq40b}
\end{align}
\end{subequations}
The Legendre functions are the common particular case of \eqref{eq39a} and \eqref{eq40a}, setting $\lambda=0$ and $\mu=0$, that leads to
\begin{subequations} 
\begin{align}
Z\left(\theta\right)\equiv P_{\nu, 0}^0\left(\cos\theta\right)=P_{\nu}\left(\cos\theta\right): \label{eq41a}\\
\frac{\mathrm{d}^2Z}{\mathrm{d}\theta^2}+\cot\theta\frac{\mathrm{d} Z}{\mathrm{d}\theta}+\nu\left(\nu+1\right)Z=0. \label{eq41b}
\end{align}
\end{subequations}

Before establishing the relation with the Gaussian hypergeometric function (in section \ref{sec:5}), it is convenient to write the differential equation \eqref{eq24b} in an alternative form. The change of independent variable
\begin{subequations}
\begin{equation}
x=\cos\theta, \label{eq42a}
\end{equation}
assuming that
\begin{equation}
H\left(x\right)\equiv P_{\nu, \lambda}^\mu\left(\cos\theta\right), \label{eq42b}
\end{equation}
transforms \eqref{eq24b} to the hyperspherical associated Legendre differential equation
\begin{equation}\label{eq42c}
\left(1-x^2\right)H''-2\left(1+\lambda\right)x H'+\left[\nu\left(\nu+1\right)-\frac{\mu^2}{1-x^2}\right]H=0
\end{equation}
\end{subequations}
with variable $x$, degree $\nu$, order $\mu$ and dimension $\lambda$. The transformation of the differential equation \eqref{eq42c} to a Gaussian hypergeometric type will establish the relation with the Legendre functions of all four types, namely \eqref{eq24a}, \eqref{eq39a}, \eqref{eq40a} and \eqref{eq41a}.

\section{Relation with Gaussian hypergeometric functions}
\label{sec:5}

The relation with Gaussian hypergeometric functions is: (i) obtained first (see lemma \ref{lem8}) for hyperspherical Legendre functions (subsection \ref{sec:5.1}); (ii) reviewed next (see lemma \ref{lem9}) for associated Legendre functions (subsection \ref{sec:5.2}); (iii) finally generalized (see theorem \ref{thm3}) for hyperspherical associated Legendre functions (subsection \ref{sec:5.3}).

\subsection{Hyperspherical Legendre functions}
\label{sec:5.1}

\begin{lemma} \label{lem8}
The hyperspherical Legendre functions \eqref{eq40a} are related to the Gaussian hypergeometric functions by
\begin{subequations}
\begin{equation}
P_{\nu, \lambda}\left(\cos\theta\right)=F\left(\alpha, \beta; 1+\lambda; \frac{1-\cos\theta}{2}\right)
\label{eq43a}
\end{equation}
with parameters
\begin{equation}
\alpha, \beta=\lambda+\frac{1}{2}\pm\left|
\left(\lambda+\frac{1}{2}\right)^2+\nu\left(\nu+1\right)
\right|^{1/2}.
\label{eq43b}
\end{equation}
\end{subequations}
\end{lemma}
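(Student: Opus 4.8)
The plan is to reduce the $\mu=0$ case of the hyperspherical associated Legendre equation to the standard Gaussian hypergeometric equation by a single change of independent variable. Since Lemma \ref{lem8} concerns the hyperspherical Legendre functions $P_{\nu,\lambda}$, defined through \eqref{eq40b}, I would begin not from the trigonometric form but from the already-available algebraic form \eqref{eq42c} specialized to $\mu=0$, namely
\begin{equation*}
\left(1-x^2\right)H''-2\left(1+\lambda\right)xH'+\nu\left(\nu+1\right)H=0,
\end{equation*}
where $x=\cos\theta$ and $H(x)=P_{\nu,\lambda}(\cos\theta)$. This equation has regular singular points at $x=\pm 1$ and at infinity, exactly the three-singular-point structure of the hypergeometric equation, so the reduction should amount to nothing more than relocating the singular points to $0$, $1$ and $\infty$.

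First I would introduce $t=(1-x)/2=(1-\cos\theta)/2$, which sends $x=1$ to $t=0$ and $x=-1$ to $t=1$. Using $x=1-2t$, $1-x^2=4t(1-t)$, $\mathrm{d}/\mathrm{d}x=-\frac{1}{2}\,\mathrm{d}/\mathrm{d}t$ and $\mathrm{d}^2/\mathrm{d}x^2=\frac{1}{4}\,\mathrm{d}^2/\mathrm{d}t^2$, the equation collapses to
\begin{equation*}
t\left(1-t\right)H''+\left(1+\lambda\right)\left(1-2t\right)H'+\nu\left(\nu+1\right)H=0,
\end{equation*}
with primes now denoting $\mathrm{d}/\mathrm{d}t$. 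Comparing this termwise with the canonical form $t(1-t)y''+[c-(a+b+1)t]y'-ab\,y=0$ gives the identifications $c=1+\lambda$, $a+b=1+2\lambda$ and $ab=-\nu(\nu+1)$. Hence $\alpha$ and $\beta$ are the two roots of $z^2-(1+2\lambda)z-\nu(\nu+1)=0$, and since $(1+2\lambda)^2=4(\lambda+\frac{1}{2})^2$ one pulls the factor of $4$ inside the radical to obtain
\begin{equation*}
\alpha,\beta=\lambda+\frac{1}{2}\pm\left|\left(\lambda+\frac{1}{2}\right)^2+\nu\left(\nu+1\right)\right|^{1/2},
\end{equation*}
which is precisely \eqref{eq43b}. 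It then remains to identify $P_{\nu,\lambda}$ with the Frobenius solution regular at $t=0$, whose normalization to unity at $t=0$ is by definition $F(\alpha,\beta;1+\lambda;t)$, yielding \eqref{eq43a}.

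The computation itself is routine; the only genuine point requiring care is the final identification of the correct solution. The indicial exponents at $t=0$ are $0$ and $1-c=-\lambda$, and one must check that the defining normalization of $P_{\nu,\lambda}(\cos\theta)$ at $\theta=0$ (that is, $x=1$, $t=0$) agrees with $F(\,\cdot\,;0)=1$, while confirming that $1+\lambda$ is not a non-positive integer so that this branch of the Frobenius construction is well defined; the linearly independent companion, carrying the factor $t^{-\lambda}$, is discarded. As a consistency check, the reduced equation is invariant under $t\mapsto 1-t$ (equivalently $\theta\mapsto\pi-\theta$), which together with the interchangeability of $\alpha$ and $\beta$ in $F$ accounts for the $\pm$ ambiguity in \eqref{eq43b} without any further argument.
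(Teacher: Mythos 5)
Your proposal is correct and follows essentially the same route as the paper's proof: the same substitution $t=(1-x)/2$ applied to \eqref{eq42c} with $\mu=0$, the same coefficient matching $\gamma=1+\lambda$, $\alpha+\beta=1+2\lambda$, $\alpha\beta=-\nu\left(\nu+1\right)$, and the same quadratic whose roots give \eqref{eq43b}. Your closing remarks on the Frobenius branch and normalization at $t=0$ go slightly beyond the paper, which simply adopts \eqref{eq43a} as the definition of the first-kind hyperspherical Legendre function, checked for consistency with the classical case \eqref{eq49}.
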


\begin{proof}
In the theory of the associated Legendre differential equation, the change of dependent variable
\begin{subequations}
\begin{equation}
y=\frac{1-x}{2},
\label{eq44a}
\end{equation}
defining
\begin{equation}
Q\left(y\right)\equiv H\left(x\right),
\label{eq44b}
\end{equation}
leads from \eqref{eq42c} to the ordinary differential equation
\begin{equation}
y\left(1-y\right)Q''+\left[1+\lambda-2\left(1+\lambda\right)y\right]Q'+\left[\nu\left(\nu+1\right)-\frac{\left(\mu/2\right)^2}{y\left(1-y\right)}\right]Q=0.
\label{eq44c}
\end{equation}
\end{subequations}
This last equation is of the hypergeometric type, that is, its solution is
\begin{subequations}
\begin{equation}
Q\left(y\right)=F\left(\alpha, \beta; \gamma; y\right)
\label{eq45a}
\end{equation}
because it satisfies the differential equation in the form
\begin{equation}
y\left(1-y\right)Q''+\left[\gamma-\left(\alpha+\beta+1\right)y\right]Q'-\alpha\beta Q=0
\label{eq45b}
\end{equation}
\end{subequations}
for the case $\mu=0$ with parameters satisfying
\begin{equation}
\alpha+\beta=1+2\lambda, \quad \alpha\beta=-\nu\left(\nu+1\right), \quad \gamma=1+\lambda.
\label{eq46b-d}
\end{equation}
The condition $\mu=0$ in \eqref{eq42b}, and using \eqref{eq44b} and \eqref{eq45a}, leads to
\begin{equation}
P_{\nu, \lambda}^0\left(\cos\theta\right)=Q\left(\frac{1-\cos\theta}{2}\right)=F\left(\alpha, \beta; \gamma; \frac{1-\cos\theta}{2}\right) 
\label{eq47}
\end{equation}
that proves \eqref{eq43a}. The parameters $\alpha, \beta\equiv\chi$ satisfying \eqref{eq46b-d} are the roots of
\begin{equation}
0=\left(\chi-\alpha\right)\left(\chi-\beta\right)=\chi^2-\left(\alpha+\beta\right)\chi+\alpha\beta=\chi^2-\left(1+2\lambda\right)\chi-\nu\left(\nu+1\right)
\label{eq48b}
\end{equation}
and the roots of the last equation are \eqref{eq43b}.
\end{proof}

In general, $P_{\nu, \lambda}^0\left(x\right)$ would be a linear combination of the two Gaussian hypergeometric functions which are solutions of \eqref{eq40b}. Moreover, in the case $\lambda=0$ of the Legendre functions \eqref{eq41a},
\begin{equation}
P_\nu\left(\cos\theta\right)=F\left(-\nu, 1+\nu; 1; \frac{1-\cos\theta}{2}\right).
\label{eq49}
\end{equation}
This agrees with \eqref{eq43a} and \eqref{eq43b} for $\lambda=0$, and is a known result \cite{Abramowitz1965}. Thus, \eqref{eq43a} is a definition of hyperspherical Legendre function of first kind which is consistent with the original Legendre function of first kind.

\subsection{Associated Legendre functions}
\label{sec:5.2}

Not only the hyperspherical Legendre functions (see lemma \ref{lem8}), but also the associated Legendre functions (see lemma \ref{lem9}) are related to Gaussian hypergeometric functions.

\begin{lemma} \label{lem9}
The associated Legendre functions are related to the Gaussian hypergeometric functions by
\begin{equation}
P_\nu^\mu\left(\cos\theta\right)=\frac{\left[\left(\cos\theta+1\right)/\left(\cos\theta-1\right)\right]^{\mu/2}}{\Gamma\left(1-\mu\right)}F\left(-\nu, \nu+1; 1-\mu; \frac{1-\cos\theta}{2}\right).
\label{eq50}
\end{equation}
\end{lemma}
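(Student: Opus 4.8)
The plan is to follow the same two-move strategy as in the proof of Lemma~\ref{lem8}, but now carrying the order $\mu \neq 0$ through the calculation: first strip off the singular part of the solution with an explicit prefactor so as to annihilate the term $\mu^2/(1-x^2)$, and then reduce what remains to a Gaussian hypergeometric equation via the substitution \eqref{eq44a}. The normalising factor $1/\Gamma(1-\mu)$ in \eqref{eq50} will have to be supplied separately, since the differential equation fixes a solution only up to a constant.

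First I would specialise \eqref{eq42c} to $\lambda=0$, recovering the classical associated Legendre equation $(1-x^2)H''-2xH'+[\nu(\nu+1)-\mu^2/(1-x^2)]H=0$ in the variable $x=\cos\theta$. The decisive substitution is $H(x)=[(x+1)/(x-1)]^{\mu/2}\,G(x)$, exactly the prefactor $P$ appearing in \eqref{eq50}. Its virtue is that its logarithmic derivative is the clean rational function $P'/P=\mu/(1-x^2)$. Inserting $H=PG$ and dividing by $P$, the coefficient of $G$ collects the combination $(1-x^2)P''/P-2xP'/P=\mu^2/(1-x^2)$, which cancels the singular term precisely, while the coefficient of $G'$ reduces to $2(\mu-x)$. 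One is thereby left with the non-singular equation $(1-x^2)G''+2(\mu-x)G'+\nu(\nu+1)G=0$.

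Next I would apply the change of variable \eqref{eq44a}, $y=(1-x)/2$, under which $1-x^2=4y(1-y)$; a short computation turns the preceding equation into $y(1-y)G''+(1-\mu-2y)G'+\nu(\nu+1)G=0$. Matching this against the standard hypergeometric form \eqref{eq45b} forces $\gamma=1-\mu$, $\alpha+\beta=1$ and $\alpha\beta=-\nu(\nu+1)$. Solving the resulting quadratic for $\alpha,\beta$ exactly as in \eqref{eq48b}, the discriminant is $1+4\nu(\nu+1)=(2\nu+1)^2$, a perfect square, so the roots are simply $\{\alpha,\beta\}=\{-\nu,\nu+1\}$ with no surd surviving, in contrast to \eqref{eq43b}. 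Hence $G(y)=F(-\nu,\nu+1;1-\mu;y)$, and reassembling $H=PG$ reproduces the product structure of \eqref{eq50}.

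The one feature the differential equation cannot settle is the normalisation: as in Lemma~\ref{lem8}, the ODE determines the solution only up to a multiplicative constant and a choice between the two independent solutions near $y=0$. I expect this to be the main obstacle, in the sense of being the only genuinely delicate bookkeeping. The factor $1/\Gamma(1-\mu)$ must be pinned down from the accepted convention for $P_\nu^\mu$ rather than derived, which I would do by examining the local behaviour as $x\to1$ (equivalently $y\to0$), where $F(\,\cdot\,;y)\to1$, and matching it against the standard series definition of the associated Legendre function while tracking the branch of the prefactor $[(x+1)/(x-1)]^{\mu/2}$. The reduction to hypergeometric form itself is routine once the prefactor is chosen correctly.
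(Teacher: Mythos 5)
Your proposal is correct and follows essentially the same route as the paper's proof: the prefactor substitution $H=[(x+1)/(x-1)]^{\mu/2}J$ of \eqref{eq51a}, the change of variable $y=(1-x)/2$ as in \eqref{eq52a}, the parameter matching \eqref{eq53b-d}--\eqref{eq53e-f}, and the normalising constant $C=1/\Gamma(1-\mu)$ fixed by convention rather than by the differential equation. The only (immaterial) difference is that you set $\lambda=0$ at the outset, whereas the paper carries general $\lambda$ through to \eqref{eq52c} before specialising, which lets it observe in passing that the reduction fails for $\lambda\neq0$.
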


\begin{proof}
The change of dependent variable
\begin{subequations}
\begin{equation}
H\left(x\right)=\left(\frac{x+1}{x-1}\right)^{\mu/2}J\left(x\right)
\label{eq51a}
\end{equation}
transforms \eqref{eq42c} to the ordinary differential equation
\begin{equation}
\left(1-x^2\right)J''+2\left[\mu-\left(1+\lambda\right)x\right]J'+\left[\nu\left(\nu+1\right)-\frac{2\lambda\mu x}{1-x^2}\right]J=0.
\label{eq51b}
\end{equation}
\end{subequations}
The change of independent variable
\begin{subequations}
\begin{equation}
y=\frac{1-x}{2}
\label{eq52a}
\end{equation}
with
\begin{equation}
Q\left(y\right)\equiv J\left(x\right),
\label{eq52b}
\end{equation}
similar to \eqref{eq44a} and \eqref{eq44b}, leads to
\begin{equation}
y\left(1-y\right)Q''+\left[1+\lambda-\mu-2\left(1+\lambda\right)y\right]Q'+\left[\nu\left(\nu+1\right)-\frac{\lambda\mu}{2}\frac{1/y-2}{1-y}\right]Q=0.
\label{eq52c}
\end{equation}
\end{subequations}
In the case $\lambda=0$, this is a Gaussian hypergeometric equation \eqref{eq45b} with parameters
\begin{subequations}
\begin{equation}
\gamma=1-\mu, \quad \alpha+\beta=1, \quad \alpha\beta=-\nu\left(\nu+1\right) \label{eq53b-d}
\end{equation}
implying
\begin{equation}
\alpha=-\nu, \quad \beta=1+\nu. \label{eq53e-f}
\end{equation}
\end{subequations}
Substituting \eqref{eq53e-f} in \eqref{eq45a} and using the equations \eqref{eq42a}, \eqref{eq42b}, \eqref{eq51a}, \eqref{eq52a} and \eqref{eq52b}, leads to
\begin{subequations}
\begin{equation}
P_\nu^\mu\left(\cos\theta\right)=C\left(\frac{\cos\theta+1}{\cos\theta-1}\right)^{\mu/2} F\left(-\nu, \nu+1; 1-\mu; \frac{1-\cos\theta}{2}\right)
\label{eq54a}
\end{equation}
where $C$ is an arbitrary constant. The choice of the constant
\begin{equation}
C=\frac{1}{\Gamma\left(1-\mu\right)}
\label{eq54b}
\end{equation}
\end{subequations}
in \eqref{eq54a} leads to \eqref{eq50}.
\end{proof}
The arbitrary constant in \eqref{eq54a} was chosen to agree with the known relations between associated Legendre and Gaussian hypergeometric functions \cite{Abramowitz1965}.

In the case $\lambda\neq 0$, then \eqref{eq52c} does not reduce to a Gaussian hypergeometric equation. In conclusion, it has been shown (see lemma \ref{lem8}) that the hyperspherical Legendre functions \eqref{eq40a} are a particular case of the Gaussian hypergeometric function \eqref{eq43a}. This result does not extend to the hyperspherical associated Legendre functions with $\mu\neq 0$ via the usual methods of the theory of associated Legendre functions (see lemma \ref{lem9}). By following an alternative approach, it is shown in the next subsection (see theorem \ref{thm3}) that the hyperspherical associated Legendre equation can be reduced to the Gaussian hypergeometric equation.

\subsection{Reduction to the Gaussian hypergeometric equation}
\label{sec:5.3}

Rather than follow the usual theory of associated Legendre functions, a different set of changes of independent variables is used to reduce the hyperspherical associated Legendre equation \eqref{eq42c} to the Gaussian hypergeometric type. The lemmas \ref{lem8} and \ref{lem9} are particular cases of the next theorem.

\begin{theorem} \label{thm3}
The hyperspherical associated Legendre functions are related to the Gaussian hypergeometric functions by
\begin{subequations}
\begin{equation}
P_{\nu, \lambda}^\mu\left(x\right)=\left(1-x^2\right)^\vartheta F\left(\alpha, \beta; \frac{1}{2}; x^2\right)
\label{eq55a}
\end{equation}
or alternatively
\begin{equation}
P_{\nu, \lambda}^\mu\left(\cos\theta\right)=\sin^{2\vartheta}\theta F\left(\alpha, \beta; \frac{1}{2}; \cos^2\theta\right)
\label{eq55b}
\end{equation}
\end{subequations}
where the parameters are given by
\begin{equation}
\begin{aligned}
\vartheta&=-\frac{\lambda}{2}\pm\left|
\frac{\lambda^2+\mu^2}{4}
\right|^{1/2},\\
2\left\lbrace\alpha, \beta\right\rbrace&=2\vartheta+\lambda+\frac{1}{2}\pm\left|
4\vartheta^2+\left(\lambda+\frac{1}{2}\right)^2+\nu\left(\nu+1\right)
\right|^{1/2}.
\end{aligned}
\label{eq56a-c}
\end{equation}
\end{theorem}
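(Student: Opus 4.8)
The plan is to reduce the hyperspherical associated Legendre equation \eqref{eq42c} to the Gaussian hypergeometric equation \eqref{eq45b} by a \emph{two-stage} change of variables, rather than the single substitution that sufficed in the degenerate cases $\mu=0$ (lemma \ref{lem8}) and $\lambda=0$ (lemma \ref{lem9}). The obstruction in those lemmas was the inverse-square term $\mu^2/(1-x^2)$ acting together with the dimensional parameter $\lambda$. The idea is therefore to strip the inverse-square singularity off by an explicit multiplicative factor $(1-x^2)^\vartheta$, and then to exploit the parity symmetry $x\mapsto-x$ of \eqref{eq42c} by passing to the square variable $z=x^2$, which is what forces the lower hypergeometric parameter to be $\gamma=\tfrac{1}{2}$.

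First I would set $H(x)=(1-x^2)^\vartheta\,G(x)$ and substitute into \eqref{eq42c}, organising the result by powers of $(1-x^2)$. Differentiation produces a contribution proportional to $(1-x^2)^{\vartheta-1}$ whose coefficient, after writing $x^2=1-(1-x^2)$, is $4\vartheta(\vartheta+\lambda)-\mu^2$; this is precisely the inverse-square singularity inherited from the $\mu^2/(1-x^2)$ term. Requiring that this most singular contribution vanish yields the indicial condition
\[\vartheta^2+\lambda\vartheta-\frac{\mu^2}{4}=0,\]
whose two roots are exactly the values of $\vartheta$ recorded in \eqref{eq56a-c}. I expect this to be the \emph{main obstacle}: one must check that the singular term cancels identically (not merely to leading order) and that what remains is a bona fide second-order equation
\[(1-x^2)G''-[2(1+\lambda)+4\vartheta]\,xG'+[\nu(\nu+1)-2\vartheta-4\vartheta(\vartheta+\lambda)]\,G=0,\]
whose coefficient functions are even apart from the odd factor $x$ multiplying $G'$, so that the equation is invariant under $x\mapsto-x$.

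With $\vartheta$ fixed, the second stage is the substitution $z=x^2$, legitimate precisely because of the parity symmetry just established. Using $G'(x)=2x\,\dot G(z)$ and $G''(x)=2\dot G(z)+4z\,\ddot G(z)$ and dividing throughout by $4$, the equation assumes the canonical hypergeometric form \eqref{eq45b}. Reading off the coefficients then gives
\[\gamma=\frac{1}{2},\qquad \alpha+\beta=2\vartheta+\lambda+\frac{1}{2},\qquad \alpha\beta=\frac{1}{4}[\,2\vartheta+4\vartheta(\vartheta+\lambda)-\nu(\nu+1)\,],\]
and solving the quadratic $\chi^2-(\alpha+\beta)\chi+\alpha\beta=0$, exactly as in \eqref{eq48b}, produces the pair $\{\alpha,\beta\}$ of \eqref{eq56a-c}. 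Unwinding the two substitutions recovers $P_{\nu,\lambda}^\mu(x)=(1-x^2)^\vartheta F(\alpha,\beta;\tfrac{1}{2};x^2)$, which is \eqref{eq55a}, and \eqref{eq55b} follows on setting $x=\cos\theta$. As consistency checks I would verify that $\mu=0$ admits the root $\vartheta=0$ and reduces the representation to that of lemma \ref{lem8}, while $\lambda=0$ reproduces the associated Legendre case of lemma \ref{lem9}; the emergence of $\gamma=\tfrac{1}{2}$ together with the two branches $\pm$ in both $\vartheta$ and $\{\alpha,\beta\}$ correctly encodes the two linearly independent solutions, one even and one odd in $x$.
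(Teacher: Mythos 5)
Your proposal is correct and takes essentially the same route as the paper: the paper applies the identical two substitutions, merely in the opposite order (first $z=x^2$ giving \eqref{eq57c}, then the factor $L=(1-z)^{\vartheta}M$ in \eqref{eq58a}), and since the substitutions commute it arrives at the same indicial condition \eqref{eq59a}, the same hypergeometric parameters \eqref{eq60a-c}, and the same quadratic \eqref{eq61b} for $\left\lbrace\alpha,\beta\right\rbrace$ that you derive. One small slip in your closing consistency check: both branches $\left(1-x^2\right)^{\vartheta_{\pm}}F\left(\alpha_{\pm},\beta_{\pm};\tfrac{1}{2};x^2\right)$ are even in $x$ (they depend on $x$ only through $x^2$), so the two independent solutions are distinguished by their exponents at $x=\pm1$ coming from the two roots of \eqref{eq59a}, not by parity.
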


\begin{proof}
The change of independent variable
\begin{subequations}
\begin{equation}
z=x^2 \label{eq57a}
\end{equation}
with
\begin{equation}
L\left(z\right)\equiv H\left(x\right)=P_{\nu, \lambda}^\mu\left(\cos\theta\right) \label{eq57b}
\end{equation}
leads to the ordinary differential equation
\begin{equation}
4z\left(1-z\right)L''+2\left[1-\left(3+2\lambda\right)z\right]L'+\left[\nu\left(\nu+1\right)-\frac{\mu^2}{1-z}\right]L=0
\label{eq57c}
\end{equation}
\end{subequations}
where the coefficient of the highest order derivative $L''$ in $z$ is cubic instead of quadratic in \eqref{eq42c}. The degree of the coefficient of $L''$ can be depressed further by a change of dependent variable
\begin{subequations}
\begin{equation}
L\left(z\right)=\left(1-z\right)^\vartheta M\left(z\right)
\label{eq58a}
\end{equation}
leading to the ordinary differential equation
\begin{align}
4z\left(1-z\right)M''&+2\left[1-\left(3+2\lambda+4\vartheta\right)z\right]M'\nonumber\\
&+\left[\nu\left(\nu+1\right)-4\vartheta^2-2\vartheta-4\lambda\vartheta-\frac{\mu^2-4\vartheta^2-4\lambda\vartheta}{1-z}\right]M=0
\label{eq58b}
\end{align}
\end{subequations}
where the constant $\vartheta$ can be chosen at will. Choosing $\vartheta$ to satisfy
\begin{subequations}
\begin{equation}
\vartheta^2+\lambda\vartheta-\frac{\mu^2}{4}=0,
\label{eq59a}
\end{equation}
the ordinary differential equation \eqref{eq58b} can be divided through by $1-z$, so that in
\begin{equation}
z\left(1-z\right)M''+\left[\frac{1}{2}-\left(\lambda+\frac{3}{2}+2\vartheta\right)z\right]M'+\left[\frac{\nu\left(\nu+1\right)}{4}-\vartheta^2-\frac{\vartheta}{2}-\lambda\vartheta\right]M=0
\label{eq59b}
\end{equation}
\end{subequations}
the coefficient of the highest order derivative is now of degree two. Furthermore, \eqref{eq59b} is a Gaussian hypergeometric equation, like \eqref{eq45b}, with parameters
\begin{equation}
\gamma=\frac{1}{2}, \quad \alpha+\beta=2\vartheta+\lambda+\frac{1}{2}, \quad \alpha\beta=\vartheta^2+\frac{\vartheta}{2}+\lambda\vartheta-\frac{\nu\left(\nu+1\right)}{4}
\label{eq60a-c}
\end{equation}
with $\vartheta$ given by \eqref{eq59a}. Thus, the function $M$ in \eqref{eq58b} is a Gaussian hypergeometric function, confirming that $z=0, 1, \infty$ are regular singularities. Substitution of \eqref{eq57a}, \eqref{eq57b} and \eqref{eq58a} leads to \eqref{eq55a} or \eqref{eq55b} where the roots of \eqref{eq59a} are in the first equation of \eqref{eq56a-c} and from \eqref{eq60a-c}, with $\alpha, \beta\equiv\chi$, follows
\begin{equation}
0=\chi^2-\left(\alpha+\beta\right)\chi+\alpha\beta=\chi^2-\left(2\vartheta+\lambda+\frac{1}{2}\right)\chi+\vartheta^2+\frac{\vartheta}{2}+\lambda\vartheta-\frac{\nu\left(\nu+1\right)}{4}
\label{eq61b}
\end{equation}
whose roots are in the last two equations of \eqref{eq56a-c}.
\end{proof}

Since there are two roots of \eqref{eq59a}, substitution in the last two equations of \eqref{eq56a-c} leads to two distinct set of parameters ($\vartheta$, $\alpha$, $\beta$); each specifies one solution of the hyperspherical associated Legendre differential equation. Each solution can be written in the alternate form \eqref{eq55a} for \eqref{eq42c} or \eqref{eq55b} for \eqref{eq24b}. Both solutions are plotted in the figure \ref{fig0}. The solid line is the solution using the plus sign in the first equation of \eqref{eq56a-c} while the dashed line is the solution with the minus sign. Since the two particular integrals are linearly independent, the general integral is a linear combination of both.

\section{Conclusion}
\label{sec:6}

The partial differential equation consisting of the Laplacian in $N$ dimensions equated to a linear differential operator in time with constant coefficients was considered in the section \ref{sec:3}, as the extended equation of mathematical physics, that includes frequently used equations such as the wave, heat, Schr\"{o}dinger, telegraph and other equations. The solution, stated in the section \ref{sec:4}, in hyperspherical coordinates by separation of variables leads to known functions except for $N-3$ latitudes. The latitudes are specified by the hyperspherical associated Legendre functions \eqref{eq24a} that reduce to the original associated Legendre functions only in three dimensions. The hyperspherical associated Legendre functions are expressible in terms of Gaussian hypergeometric functions, not only for the original, but also for the hyperspherical Legendre functions, as explained in section \ref{sec:5}.

It has not been necessary to go beyond the Gaussian hypergeometric functions and to consider the extended Gaussian hypergeometric functions \cite{Campos2000a,Campos2001}, which have an irregular singularity at infinity. The brief preceding analysis shows that the hyperspherical associated Legendre functions are a generalization of the associated Legendre functions. Although the usual methods of the theory of special functions and singular differential equations apply \cite{Ince1956,Forsyth1890-1906,Kamke1942}, the properties of the hyperspherical associated Legendre functions can also be obtained from the relation with Gaussian hypergeometric functions.

The hyperspherical associated Legendre functions are a joint generalization of the hyperspherical Legendre functions \cite{Campos2010} that extend the classical multipoles from three to higher dimensions and of the classical associated Legendre functions that lead to spherical harmonics \cite{MacRobert1967,Hobson1931}. A suggested notation for Legendre functions is indicated in the table \ref{tab:1} and figure \ref{fig:1} adding to the classical (i) Legendre and (ii) associated Legendre functions two more: (iii) the hyperspherical associated Legendre functions associated with the separation of the Laplacian in hyperspherical coordinates (present paper); (iv) the hyperspherical Legendre functions arising from the extension of the multipolar expansion to $N$ dimensions \cite{Campos2014}, which are the particular case $\lambda=0$.

\begin{table}[!htb]
  \renewcommand{\arraystretch}{1.2} 
  \centering
\begin{tabular}{cc}
\hline
$P_\nu\left(z\right)$                & Legendre function of variable $z$ and degree $\nu$                                                                                                        \\ \hline
$P_\nu^\mu\left(z\right)$            & Associated Legendre function of variable $z$, order $\mu$ and degree $\nu$                                                                                \\ \hline
$P_{\nu, \lambda}\left(z\right)$     & \begin{tabular}[c]{@{}c@{}}Hvperspherical Legendre function \\ of variable $z$, dimension $\lambda$ and degree $\nu$\end{tabular}              \\ \hline
$P_{\nu, \lambda}^\mu\left(z\right)$ & \begin{tabular}[c]{@{}c@{}}Hyperspherical associated Legendre function of variable $z$, \\ dimension $\lambda$, order $\mu$ and degree $\nu$\end{tabular} \\ \hline
\end{tabular}
  \caption{Notation for Legendre functions.}
  \label{tab:1}
\end{table}

\begin{figure}
\centering
\includegraphics{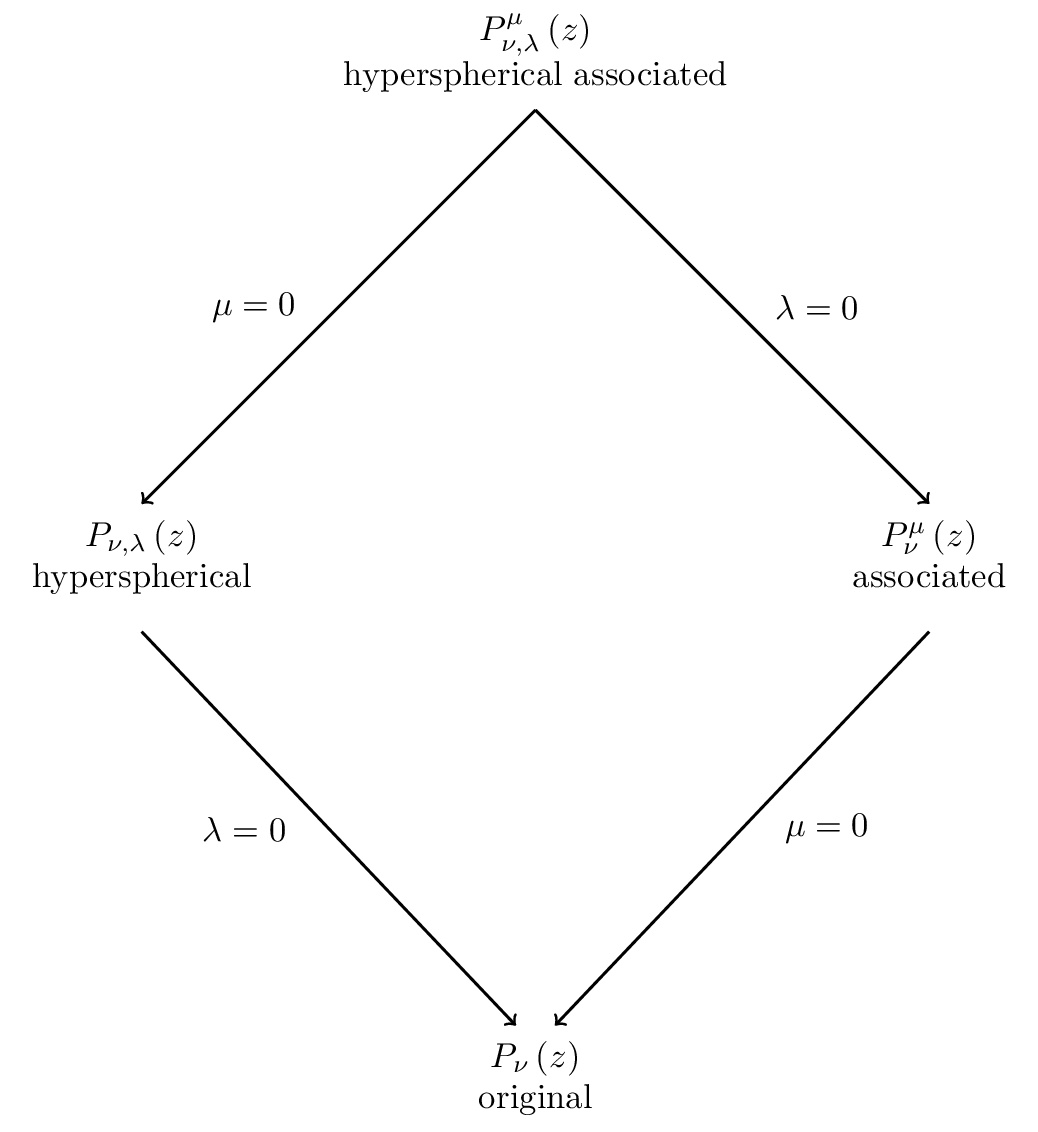}
\caption{Relation between Legendre functions where $P$ denote functions of first kind and $Q$ are functions of second kind.}
\label{fig:1}
\end{figure}

\section*{Acknowledgments}

This work was supported by the Funda\c{c}\~{a}o para a Ci\^{e}ncia e Tecnologia (FCT), Portugal, through Institute of Mechanical Engineering (IDMEC), under the Associated Laboratory for Energy, Transports and Aeronautics (LAETA), whose grant numbers are UID/EMS/50022/2019 and SFRH/BD/143828/2019.

\appendix

\section{Helmholtz equation and its solutions}
\label{sec:appA}

The complete equation \eqref{eq13a} with all terms, as well as its generalization \eqref{eq13b}, can be reduced to an Helmholtz equation by considering the Fourier transform in time
\begin{subequations}
\begin{equation}
F\left(x_n, t\right)=\int\limits_{-\infty}^{+\infty} \widetilde{F}\left(x_n, \omega\right) \mathrm{e}^{-\mathrm{i}\omega t}\,\mathrm{d}\omega
\label{eq8.342a}
\end{equation}
implying that the derivation with regard to time
\begin{equation}
\frac{\partial F}{\partial t}\rightarrow -\mathrm{i}\omega\widetilde{F}
\label{eq8.342b}
\end{equation}
\end{subequations}
is equivalent to multiplication by $-\mathrm{i}\omega$ where $\omega$ is the frequency. Substitution of \eqref{eq8.342b} in \eqref{eq13b} shows that the solution of the generalized isotropic equation of mathematical physics \eqref{eq13b} has a Fourier transform in time \eqref{eq8.342a} that satisfies an Helmholtz equation
\begin{subequations}
\begin{equation}
\nabla^2\widetilde{F}+k^2\widetilde{F}=0
\label{eq8.343b}
\end{equation}
with the square of wavenumber
\begin{equation}
k^2=\sum_{l=0}^L A_l\left(-\mathrm{i}\omega\right)^l
\label{eq8.343a}
\end{equation}
\end{subequations}
that is complex in the presence of time derivatives of odd order, for example
\begin{equation}
k^2=-a+\frac{\mathrm{i}\omega}{b}+\frac{\omega^2}{c^2}
\label{eq8.344a}
\end{equation}
for the spectrum of the second-order isotropic equation of mathematical physics \eqref{eq13a}.

Before determining the solutions of the Helmholtz equation \eqref{eq8.343b} in cylindrical, spherical or hypercylindrical coordinates that is the new feature of this work, the appendix \ref{sec:appA.1} intends to determine the solutions in Cartesian coordinates to show that the only solutions in that case are the sinusoidal functions. The solution of the Helmholtz equation is therefore simplest in Cartesian coordinates for $N$ dimensions when the Laplacian is a second-order partial differential operator with constant coefficients. When the curvilinear coordinates are used in the Laplacian, special functions appear in the solution.

\subsection{Separation of variables in Cartesian coordinates}
\label{sec:appA.1}

Using the Laplacian in $N$-dimensional Cartesian coordinates,
\begin{subequations}
\begin{equation}
\nabla^2\widetilde{F}=\sum_{n=1}^N \frac{\partial^2\widetilde{F}}{\partial x_n^2},
\label{eq8.345a}
\end{equation}
the Helmholtz equation \eqref{eq8.343b} becomes
\begin{equation}
\sum_{n=1}^N \frac{\partial^2\widetilde{F}}{\partial x_n^2}=-k^2\widetilde{F}
\label{eq8.345b}
\end{equation}
whose solution may be sought by the method of separation of variables as the product
\begin{equation}
\widetilde{F}\left(x_1, \ldots, x_N, \omega\right)=\prod_{n=1}^N X_n\left(x_n\right)
\label{eq8.345c}
\end{equation}
\end{subequations}
of $N$ functions, one of each variable. Substituting \eqref{eq8.345c} in \eqref{eq8.345b} and dividing by $\widetilde{F}$ leads to
\begin{subequations}
\begin{equation}
\sum_{n=1}^N \frac{1}{X_n}\frac{\mathrm{d}^2 X_n}{\mathrm{d} x_n^2}=-k^2.
\label{eq8.346a}
\end{equation}
Since each term on the l.h.s. of \eqref{eq8.346a} depends on a different variable they must all be constant,
\begin{equation}
\frac{1}{X_n}\frac{\mathrm{d}^2 X_n}{\mathrm{d} x_n^2}=-k_n^2, \label{eq8.346b}
\end{equation}
with their sum satisfying
\begin{equation}
k^2=\sum_{n=1}^N k_n^2. \label{eq8.346c}
\end{equation}
\end{subequations}
Thus, to the position vector of coordinates $x_n$ may be associated a wavevector with components $k_n$ whose the sum of squares specifies the wavenumber $k$. The differential equation \eqref{eq8.346b} has the solutions
\begin{equation}
X_n\left(x_n\right)=B_n^+\exp\left(\mathrm{i} k_n x_n\right)+B_n^-\exp\left(-\mathrm{i} k_n x_n\right), \label{eq8.347b}
\end{equation}
that, substituted in \eqref{eq8.345c}, lead to
\begin{align}
\widetilde{F}\left(x_1, \ldots, x_N, \omega\right)&=\prod_{n=1}^N C_n\cos\left(k_n x_n-\alpha_n\right)\nonumber\\
&=\frac{1}{2^N}\prod_{n=1}^N C_n\left\lbrace\exp\left[\mathrm{i}\left(k_n x_n-\alpha_n\right)\right]+\exp\left[-\mathrm{i}\left(k_n x_n-\alpha_n\right)\right]\right\rbrace \label{eq8.347d}
\end{align}
where $B_n^\pm$ in \eqref{eq8.347b} and $\left(C_n, \alpha_n\right)$ in \eqref{eq8.347d} are alternative pairs of arbitrary constants related by
\begin{subequations}
\begin{align}
2B_n^\pm&=C_n\exp\left(\mp\mathrm{i}\alpha_n\right): \label{eq8.348a-b} \\
\exp\left(2\mathrm{i}\alpha_n\right)&=\frac{B_n^-}{B_n^+}, \label{eq8.348c} \\
C_n&=B_n^+\exp\left(\mathrm{i}\alpha_n\right)+B_n^-\exp\left(-\mathrm{i}\alpha_n\right). \label{eq8.348d}
\end{align}
\end{subequations}
Thus, the Helmholtz equation \eqref{eq8.343b} in $N$-dimensional Cartesian coordinates \eqref{eq8.345b} has the solution in terms of sinusoidal functions alone \eqref{eq8.347d} where $B_n^\pm$ and $\left(C_n, \alpha_n\right)$ are alternative pairs of arbitrary constants related by \eqref{eq8.348a-b} to \eqref{eq8.348d}.

In the case of polar or spherical coordinates in the plane or space respectively, the Laplacian has variable coefficients and the method of separation of variables leads to linear ordinary differential equations with variable coefficients whose the solutions involve special functions. The solution of the Helmholtz equation
is reviewed briefly for cylindrical coordinates in the appendix \ref{sec:appA.2} and for spherical coordinates in the appendix \ref{sec:appA.3} as a precursor to the solution for hypercylindrical coordinates in the appendix \ref{sec:appB}. The solution of the Helmholtz equation in cylindrical coordinates requires Bessel functions, in spherical coordinates the associated Legendre functions are needed and in hypercylindrical coordinates the hyperspherical associated Legendre functions appear.

\subsection{Cylindrical coordinates and Bessel functions}
\label{sec:appA.2}

The Laplacian using cylindrical coordinates in space adds to the Laplacian in polar coordinates, that span a plane, an orthogonal Cartesian coordinate $z$, leading to the Helmholtz equation
\begin{equation}
\frac{1}{r}\left(r\frac{\partial\widetilde{F}}{\partial r}\right)+\frac{1}{r^2}\frac{\partial^2\widetilde{F}}{\partial\phi^2}+\frac{\partial^2\widetilde{F}}{\partial z^2}=-k^2\widetilde{F}. \label{eq8.349}
\end{equation}
In this appendix \ref{sec:appA.2}, $r$ denotes the distance to the polar axis. The last Helmholtz equation can be solved by separation of variables
\begin{subequations}
\begin{equation}
\widetilde{F}=R\left(r\right)\,\Phi\left(\phi\right)\,Z\left(z\right) \label{eq8.350a}
\end{equation}
leading to
\begin{equation}
\frac{1}{Rr}\frac{\mathrm{d}}{\mathrm{d} r}\left(r\frac{\mathrm{d} R}{\mathrm{d} r}\right)+\frac{1}{r^2\Phi}\frac{\mathrm{d}^2\Phi}{\mathrm{d}\phi^2}+\frac{1}{Z}\frac{\mathrm{d}^2 Z}{\mathrm{d} z^2}+k^2=0. \label{eq8.350b}
\end{equation}
\end{subequations}
The first three terms of \eqref{eq8.350b} depend on different variables leading to the constants
\begin{subequations}
\begin{align}
\frac{1}{\Phi}\frac{\mathrm{d}^2\Phi}{\mathrm{d}\phi^2}&=-m^2, \label{eq8.351a} \\
\frac{1}{Z}\frac{\mathrm{d}^2Z}{\mathrm{d} z^2}&=-K^2, \label{eq8.351b}
\end{align}
that must satisfy
\begin{equation}
\frac{1}{Rr}\frac{\mathrm{d}}{\mathrm{d} r}\left(r\frac{\mathrm{d} R}{\mathrm{d} r}\right)-\frac{m^2}{r^2}+k^2-K^2=0. \label{eq8.351c}
\end{equation}
\end{subequations}
The solutions of \eqref{eq8.351a} and \eqref{eq8.351b} are sinusoidal functions, respectively
\begin{subequations}
\begin{align}
\Phi\left(\phi\right)&=C_+\mathrm{e}^{\mathrm{i} m\phi}+C_-\mathrm{e}^{-\mathrm{i} m\phi} \label{eq8.352a}\\
Z\left(z\right)&=B_+\mathrm{e}^{\mathrm{i} Kz}+B_-\mathrm{e}^{-\mathrm{i} Kz}, \label{eq8.352b}
\end{align}
\end{subequations}
where $C_\pm$ and $B_\pm$ are arbitrary constants. The differential equation \eqref{eq8.351c}, equivalent to
\begin{equation}
r^2\frac{\mathrm{d}^2R}{\mathrm{d} r^2}+r\frac{\mathrm{d} R}{\mathrm{d} r}+\left(\overline{k}^2r^2-m^2\right)R=0, \label{eq8.353b}
\end{equation}
is a cylindrical Bessel differential equation whose solution is a linear combination of Bessel $J_m$ and Neumann $Y_m$ functions,
\begin{subequations}
\begin{equation}
R\left(r\right)=E_+J_m\left(\overline{k}r\right)+E_-Y_m\left(\overline{k}r\right), \label{eq8.354b}
\end{equation}
with: (i) integer order $m$ equal to azimuthal wavenumber $m$ in \eqref{eq8.352a}; (ii) variable $\overline{k}r$ involving the radial wavenumber $\overline{k}$ which is equal to
\begin{equation}
\overline{k}=\left|
k^2-K^2
\right|^{1/2}\Leftrightarrow k^2=K^2+\overline{k}^2; \label{eq8.354a}
\end{equation}
\end{subequations}
(iii) the sum of the squares of the radial wavenumber $\overline{k}$ in \eqref{eq8.354b} and of the axial wavenumber $K$ in \eqref{eq8.352b} being the square of the total wavenumber $k$ in the Helmholtz equation \eqref{eq8.349}.

Substituting the solutions of $\Phi$, $Z$ and $X$ in \eqref{eq8.350a}, it follows that the solution of the Helmholtz equation in cylindrical coordinates \eqref{eq8.349} is
\begin{align}
\widetilde{F}\left(r, \phi, z, \omega\right)&=\left(B_+\mathrm{e}^{\mathrm{i} Kz}+B_-\mathrm{e}^{-\mathrm{i} Kz}\right)\left(C_+\mathrm{e}^{\mathrm{i} m\phi}+C_-\mathrm{e}^{-\mathrm{i} m\phi}\right)\nonumber\\
&\times\left[E_+J_m\left(\overline{k}r\right)+E_-Y_m\left(\overline{k}r\right)\right] \label{eq8.355}
\end{align}
involving: (i) three pairs of arbitrary constants $\left(B_\pm, C_\pm, E_\pm\right)$; (ii) the products of sinusoidal functions of the axial $z$ and azimuthal $\phi$ coordinates by a linear combination of Bessel $J_m$ and Neumann $Y_m$ functions; (iii) the azimuthal $m$, axial $K$, radial $\overline{k}$ and total $k$ wavenumbers with the last three related by \eqref{eq8.354b}. In the case of polar coordinates in the plane, the dependence on the axial coordinate $z$ in the Laplacian operator is omitted, so that the Helmholtz equation \eqref{eq8.349} simplifies to
\begin{subequations}
\begin{equation}
\frac{1}{r}\frac{\partial}{\partial r}\left(r\frac{\partial\widetilde{F}}{\partial r}\right)+\frac{1}{r^2}\frac{\partial^2\widetilde{F}}{\partial\phi^2}+k^2\widetilde{F}=0. \label{eq8.356a}
\end{equation}
The corresponding axial wavenumber is zero, $K=0$, omitting one of the factors in the solution \eqref{eq8.355} when passing to
\begin{equation}
\widetilde{F}\left(r, \phi, \omega\right)=\left(C_+\mathrm{e}^{\mathrm{i} m\phi}+C_-\mathrm{e}^{-\mathrm{i} m\phi}\right)\left[E_+J_m\left(kr\right)+E_-Y_m\left(kr\right)\right] \label{eq8.356b}
\end{equation}
\end{subequations}
where the total and radial wavenumbers coincide. The solution of the Helmholtz equation in spherical coordinates involves besides sinusoidal and Bessel functions also the associated Legendre functions.

\subsection{Spherical coordinates and associated Legendre functions}
\label{sec:appA.3}

The Laplace operator in spherical coordinates leads to the Helmholtz equation
\begin{equation}
\frac{1}{r^2}\frac{\partial}{\partial r}\left(r^2\frac{\partial\widetilde{F}}{\partial r}\right)+\frac{1}{r^2\sin\theta}\frac{\partial}{\partial\theta}\left(\sin\theta\frac{\partial\widetilde{F}}{\partial\theta}\right)+\frac{1}{r^2\sin^2\theta}\frac{\partial^2\widetilde{F}}{\partial\phi^2}+k^2\widetilde{F}=0. \label{eq8.357}
\end{equation}
The solution by separation of variables
\begin{subequations}
\begin{equation}
\widetilde{F}\left(r, \theta, \phi, \omega\right)=R\left(r\right)\,\Theta\left(\theta\right)\,\Phi\left(\phi\right) \label{eq8.358a}
\end{equation}
leads to
\begin{equation}
\frac{1}{R}\frac{\mathrm{d}}{\mathrm{d} r}\left(r^2\frac{\mathrm{d} R}{\mathrm{d} r}\right)+\frac{1}{\Theta\sin\theta}\frac{\mathrm{d}}{\mathrm{d}\theta}\left(\sin\theta\frac{\mathrm{d}\Theta}{\mathrm{d}\theta}\right)+\frac{1}{\Phi\sin^2\theta}\frac{\mathrm{d}^2\Phi}{\mathrm{d}\phi^2}+k^2r^2=0 \label{eq3.358b}
\end{equation}
\end{subequations}
that is satisfied by three separate ordinary differential equations specifying the: (i) azimuthal dependence \eqref{eq8.351a} in terms of sinusoidal functions \eqref{eq8.352a}; (ii) latitudinal dependence
\begin{subequations}
\begin{equation}
\frac{1}{\Theta\sin\theta}\frac{\mathrm{d}}{\mathrm{d}\theta}\left(\sin\theta\frac{\mathrm{d}\Theta}{\mathrm{d}\theta}\right)-\frac{m^2}{\sin^2\theta}=-q\left(q+1\right) \label{eq8.359a}
\end{equation}
leading to an associated Legendre differential equation
\begin{equation}
\frac{\mathrm{d}^2\Theta}{\mathrm{d}\theta^2}+\cot\theta\frac{\mathrm{d}\Theta}{\mathrm{d}\theta}+\left[q\left(q+1\right)-m^2\csc^2\theta\right]\Theta=0 \label{eq8.359b}
\end{equation}
\end{subequations}
whose the solution
\begin{equation}
\Theta\left(\theta\right)=D_+P_q^m\left(\cos\theta\right)+D_-Q_q^m\left(\cos\theta\right) \label{eq8.360}
\end{equation}
is a linear combination of the first $P_q^m$ and second $Q_q^m$ kinds of associated Legendre functions with degree $q$ and order $m$; (iii) radial dependence
\begin{align}
\frac{1}{R}\frac{\mathrm{d}}{\mathrm{d} r}\left(r^2\frac{\mathrm{d} R}{\mathrm{d} r}\right)+k^2r^2-q\left(q+1\right)&=r^2\frac{\mathrm{d}^2 R}{\mathrm{d} r^2}+2r\frac{\mathrm{d} R}{\mathrm{d} r}+\left[k^2r^2-q\left(q+1\right)\right]\nonumber\\
&=0 \label{eq8.361b}
\end{align}
specified by a spherical Bessel differential equation whose the general integral
\begin{equation}
R\left(r\right)=E_+j_q\left(kr\right)+E_-y_q\left(kr\right) \label{eq8.362}
\end{equation}
is a linear combination of spherical Bessel $j_q$ and Neumann $y_q$ functions of order $q$ where $k$ is the radial wavenumber. Substituting the solutions of $\Phi$, $\Theta$ and $R$ in \eqref{eq8.358a}, it follows that the solution of the Helmholtz equation in spherical coordinates \eqref{eq8.357} is
\begin{align}
\widetilde{F}\left(r, \theta, \phi, \omega\right)&=\left(C_+\mathrm{e}^{\mathrm{i} m\phi}+C_-\mathrm{e}^{-\mathrm{i} m\phi}\right)\left[D_+P_q^m\left(\cos\theta\right)+D_-Q_q^m\left(\cos\theta\right)\right]\nonumber\\
&\times\left[E_+j_q\left(kr\right)+E_-y_q\left(kr\right)\right] \label{eq8.363}
\end{align}
involving: (i) three pairs of arbitrary constants of integration $\left(C_\pm, D_\pm, E_\pm\right)$; (ii) sinusoidal functions of longitude \eqref{eq8.352a} with wavenumber $m$; (iii) two kinds of associated Legendre functions \eqref{eq8.360} of the cosine of the latitude with order $m$ and degree $q$; (iv) spherical Bessel and Neumann functions \eqref{eq8.362} of order $q$ and variable $kr$ where the radial distance is multiplied by the radial wavenumber. In this case, there is only one latitude and the corresponding associated Legendre equation. In hyperspherical coordinates, there are more latitudes and therefore more (hyperspherical, except one) associated Legendre equations, each one of degree $q_1, q_2,\ldots, q_{N-2}$ and, in that case, the constants introduced lead to $q_1=q$ where $q$ is associated to the order of spherical Bessel functions (explained in the subsection \ref{sec:3.2}). In the case of spherical coordinates, with $N=3$, there is only one associated Legendre equation of degree $q_1$ that is also simultaneously equal to the order of spherical Bessel functions $q$. Consequently, the degree of the associated Legendre functions and the order of spherical Bessel and Neumann functions, in the case of spherical coordinates, are the same because there is only one latitude.

The solution of the Helmholtz equation in cylindrical and spherical coordinates for three dimensions can be generalized to hypercylindrical and hyperspherical coordinates respectively for any higher number of dimensions. The generalization to hypercylindrical coordinates and the solution of the Helmholtz equation in that system of coordinates are detailed in the appendix \ref{sec:appB}.

\section{Multidimensional hypercylindrical coordinates}
\label{sec:appB}

The solution of the Helmholtz equation in hypercylindrical coordinates involves six steps: (i) the relation with $N$-dimensional Cartesian coordinates; (ii) the orthogonal base vectors with their moduli specifying the scale factors; (iii) the Laplacian operator leading to the Helmholtz equation; (iv) the separation of variables leading to sinusoidal functions in the azimuthal and axial directions, and Bessel functions in the radial direction; (v) the associated Legendre functions that appear for the first latitude and the hyperspherical associated Legendre functions that appear for higher-order latitudes; (vi) in the case of hypercylindrical coordinates, the last latitude that is replaced by a Cartesian axial coordinate leading to sinusoidal functions.

The hypercylindrical coordinates, with the radius $0\leq r<\infty$, longitude $0\leq\phi\leq 2\pi$, axial distance $-\infty<z<+\infty$ and $N-3$ latitudes $0\leq\theta_1, \ldots, \theta_{N-3}\leq\pi$, are defined by $N$ relations with the $N$-dimensional Cartesian coordinates. Most of them are similar with respect to hyperspherical coordinates. For instance, $N-3$ relations are similar in terms of the distance from the axis $r$ (in hyperspherical coordinates, $r$ denotes the distance to the origin) and $N-3$ latitudes, $\theta_1, \ldots, \theta_{N-3}$, leading to the same relations as \eqref{eq2}, except for the last three equations. Besides the $N-3$ relations, the hyperspherical coordinates have one more latitude $\theta_{N-2}$ and a longitude $\phi$, leading to the last three equations of \eqref{eq2}. Otherwise, the hypercylindrical coordinates have one axial Cartesian coordinate $z$ and a longitude $\phi$, substituting the last three equations of \eqref{eq2} by
\begin{equation}
\begin{aligned}
x_{N-2}&=r\sin\theta_1\sin\theta_2\ldots\sin\theta_{N-3}\cos\phi,\\
x_{N-1}&=r\sin\theta_1\sin\theta_2\ldots\sin\theta_{N-3}\sin\phi,\\
x_N&=z.
\label{eq8.366f'-h'}
\end{aligned}
\end{equation}
Thus, the transformation from hypercylindrical to Cartesian coordinates in $N$-dimensions is given by \eqref{eq2}, except the last three equations, plus \eqref{eq8.366f'-h'}. In two dimensions, $N=2$, with $x_1\equiv x$ and $x_2\equiv y$, this leads to polar coordinates ($r$, $\phi$). In three dimensions, $N=3$, with $x_3\equiv z$, this leads to cylindrical coordinates ($r$, $\phi$, $z$). The name hypercylindrical coordinates arises because the first coordinate hypersurface is an hypercylinder.

These transformations can be inverted from $N$-dimensional Cartesian to hypercylindrical coordinates: the distance from the axis is
\begin{subequations}
\begin{equation}
r=\left|\left(x_1\right)^2+\left(x_2\right)^2+\ldots+\left(x_{N-1}\right)^2\right|^{1/2}
\label{eq8.367a'}
\end{equation}
showing that the coordinate hypersurface $r=\mathrm{const}$ is an hypercylinder of radius $r$; the next $N-3$ relations are similar to hyperspherical coordinates, with
\begin{equation}
\begin{aligned}
&\cot\theta_1=x_1\left\lbrace\left|
\left(x_2\right)^2+\ldots+\left(x_{N-1}\right)^2
\right|^{-1/2}
\right\rbrace,\\
&\cot\theta_2=x_2\left\lbrace\left|
\left(x_3\right)^2+\ldots+\left(x_{N-1}\right)^2
\right|^{-1/2}\right\rbrace,\\
&\vdots\\
&\cot\theta_n=x_n\left\lbrace\left|
\left(x_{n+1}\right)^2+\ldots+\left(x_{N-1}\right)^2
\right|^{-1/2}\right\rbrace,\\
&\vdots\\
&\cot\theta_{N-3}=x_{N-3}\left\lbrace\left|
\left(x_{N-2}\right)^2+\left(x_{N-1}\right)^2
\right|^{-1/2}\right\rbrace;
\end{aligned}
\label{eq8.367b'-e'}
\end{equation}
the last two relations are different because
\begin{equation}
\begin{aligned}
&\cot\phi=\frac{x_{N-2}}{x_{N-1}},\\
&z=x_N.
\end{aligned}
\label{eq8.367f'-g'}
\end{equation}
\end{subequations}
In the transformations \eqref{eq8.367b'-e'} and \eqref{eq8.367f'-g'}, it would be possible to substitute the cotangent by other circular functions. The transformations \eqref{eq8.367a'} to \eqref{eq8.367b'-e'} from $N$-dimensional Cartesian coordinates to hypercylindrical coordinates are the inverses of \eqref{eq2} (not regarding the last three equations) and \eqref{eq8.366f'-h'}, and show that the hypercylindrical coordinates are orthogonal on the hypercylinder $r=\mathrm{const}$. The orthogonality of the hypercylindrical coordinates is proved from the base vectors that leads to scale factors.

\subsection{Base vectors and scale factors}

The Cartesian components of the contravariant hypercylindrical base vectors follow from the transformation from hypercylindrical to Cartesian coordinates,
\begin{equation}
\begin{aligned}
\overrightarrow{e}_r&=\frac{\partial x_i}{\partial r}=\left\lbrace\cos\theta_1, \sin\theta_1\cos\theta_2, \sin\theta_1\sin\theta_2\cos\theta_3, \ldots,\right.\\
&\sin\theta_1\sin\theta_2\sin\theta_3\ldots\sin\theta_{n-1}\cos\theta_n, \ldots, \sin\theta_1\sin\theta_2\sin\theta_3\ldots\sin\theta_{N-4}\cos\theta_{N-3},\\
&\left.  \sin\theta_1\sin\theta_2\sin\theta_3\ldots\sin\theta_{N-3}\cos\phi, \sin\theta_1\sin\theta_2\sin\theta_3\ldots\sin\theta_{N-3}\sin\phi, 0\right\rbrace,\\
\overrightarrow{e}_1&=\frac{\partial x_i}{\partial\theta_1}=r\left\lbrace -\sin\theta_1, \cos\theta_1\cos\theta_2, \cos\theta_1\sin\theta_2\cos\theta_3, \ldots,\right.\\
& \cos\theta_1\sin\theta_2\sin\theta_3\ldots\sin\theta_{n-1}\cos\theta_n,\ldots, \cos\theta_1\sin\theta_2\sin\theta_3\ldots\sin\theta_{N-4}\cos\theta_{N-3},\\
&\left.  \cos\theta_1\sin\theta_2\sin\theta_3\ldots\sin\theta_{N-3}\cos\phi, \cos\theta_1\sin\theta_2\sin\theta_3\ldots\sin\theta_{N-3}\sin\phi, 0\right\rbrace,\\
\overrightarrow{e}_2&=\frac{\partial x_i}{\partial\theta_2}=r\sin\theta_1\left\lbrace 0, -\sin\theta_2, \cos\theta_2\cos\theta_3, \cos\theta_2\sin\theta_3\cos\theta_4, \ldots,\right.\\
&\cos\theta_2\sin\theta_3\sin\theta_4\ldots\sin\theta_{n-1}\cos\theta_n,\ldots, \cos\theta_2\sin\theta_3\sin\theta_4\ldots\sin\theta_{N-4}\cos\theta_{N-3},\\
&\left.  \cos\theta_2\sin\theta_3\sin\theta_4\ldots\sin\theta_{N-3}\cos\phi, \cos\theta_2\sin\theta_3\sin\theta_4\ldots\sin\theta_{N-3}\sin\phi, 0\right\rbrace,\\
&\vdots\\
\overrightarrow{e}_n&=\frac{\partial x_i}{\partial\theta_n}=r\sin\theta_1\sin\theta_2\ldots\sin\theta_{n-1}\left\lbrace 0, 0,\ldots, 0,-\sin\theta_n, \cos\theta_n\cos\theta_{n+1},\right.\\
& \cos\theta_n\sin\theta_{n+1}\cos\theta_{n+2},\ldots, \cos\theta_n\sin\theta_{n+1}\ldots\sin\theta_{N-4}\cos\theta_{N-3},\\
&\left. \cos\theta_n\sin\theta_{n+1}\ldots\sin\theta_{N-3}\cos\phi, \cos\theta_n\sin\theta_{n+1}\ldots\sin\theta_{N-3}\sin\phi, 0\right\rbrace,\\
&\vdots\\
\overrightarrow{e}_{N-3}&=\frac{\partial x_i}{\partial\theta_{N-3}}=r\sin\theta_1\sin\theta_2\ldots\sin\theta_{N-4}\left\lbrace 0, 0,\ldots, 0, -\sin\theta_{N-3}, \vphantom{\cos\theta_{N-3}}\right.\\
&\left.\cos\theta_{N-3}\cos\phi, \cos\theta_{N-3}\sin\phi, 0\right\rbrace,\\
\overrightarrow{e}_\phi&=\frac{\partial x_i}{\partial\phi}=r\sin\theta_1\sin\theta_2\ldots\sin\theta_{N-3}\left\lbrace 0,0,\ldots,0, -\sin\phi, \cos\phi, 0\right\rbrace,\\
\overrightarrow{e}_{z}&=\frac{\partial x_i}{\partial z}=\left\lbrace 0, 0, \ldots, 0, 1\right\rbrace,
\end{aligned}
\label{eq8.368a'-f'}
\end{equation}
showing that all base vectors are orthogonal because
\begin{align}
\forall\, n=1,\ldots, N-3, \quad \overrightarrow{e}_r\cdot\overrightarrow{e}_n&=\overrightarrow{e}_\phi\cdot\overrightarrow{e}_n=\overrightarrow{e}_r\cdot\overrightarrow{e}_\phi=\overrightarrow{e}_r\cdot\overrightarrow{e}_z\nonumber\\
&=\overrightarrow{e}_n\cdot\overrightarrow{e}_z=\overrightarrow{e}_\phi\cdot\overrightarrow{e}_z=0.
\label{eq8.369c}
\end{align}

The hypercylindrical coordinates are an orthogonal curvilinear coordinate system in $N$ dimensions, and the modulus of the base vectors specify the scale factors,
\begin{align}
\forall\, i=1,\ldots,N \quad h_i\equiv\left|
\overrightarrow{e}_i\right|=\left\lbrace 1, r, r\sin\theta_1, r\sin\theta_1\sin\theta_2,\ldots,\vphantom{\sin_{N-2}}\right. \nonumber\\
\left.r\sin\theta_1\sin\theta_2\ldots\sin\theta_{i-2},\ldots,r\sin\theta_1\sin\theta_2\ldots\sin\theta_{N-3}, 1\right\rbrace.
\label{eq8.370a-b}
\end{align}
Comparing to hyperspherical coordinates, the scale factors in hypercylindrical coordinates are similar for the first $N-1$ elements with the distance from the origin replaced by the distance from the axis while the last element is unity because it corresponds to a Cartesian coordinate. The scale factors for an orthogonal curvilinear coordinate system specify through their squares the diagonal of the covariant metric tensor and its determinant \cite{Sokolnikoff1951}. The determinant of the covariant metric tensor is given for hypercylindrical coordinates by
\begin{align}
\left|g\right|^{1/2}&=\prod_{i=1}^N h_i\nonumber\\
&=r^{N-2}\sin^{N-3}\theta_1\sin^{N-4}\theta_2\ldots\sin^{N-i}\theta_{i-2}\ldots\sin^2\theta_{N-4}\sin\theta_{N-3}.
\label{eq8.372'}
\end{align}
The scale factors of an orthogonal curvilinear coordinate system in $N$ dimensions specify the Laplacian operator and hence the Helmholtz equation in hypercylindrical coordinates.

\subsection{Helmholtz equation in hypercylindrical coordinates}

The contravariant metric tensor and the determinant of the covariant metric tensor can be used to write any invariant differential operator, for example the scalar Laplacian, that simplifies to \eqref{eq11b} in orthogonal curvilinear coordinates \cite{Sokolnikoff1951}. Using the scale factors \eqref{eq8.370a-b} in hypercylindrical coordinates, the successive terms are: (i) for the radius,
\begin{subequations}
\begin{equation}
\frac{1}{\sqrt{g}}\frac{\partial}{\partial r}\left[\sqrt{g}\left(h_1\right)^{-2}\frac{\partial}{\partial r}\right]=\frac{1}{r^{N-2}}\frac{\partial}{\partial r}\left(r^{N-2}\frac{\partial}{\partial r}\right)=\frac{\partial^2}{\partial r^2}+\frac{N-2}{r}\frac{\partial}{\partial r},
\label{eq8.374a'}
\end{equation}
which coincides with the radial part of the Laplacian in polar coordinates for $N=2$ or cylindrical coordinates for $N=3$; (ii) for the first latitude,
\begin{equation}
\frac{1}{\sqrt{g}}\frac{\partial}{\partial \theta_1}\left[\sqrt{g}\left(h_2\right)^{-2}\frac{\partial}{\partial \theta_1}\right]=\frac{1}{r^2\sin^{N-3}\theta_1}\frac{\partial}{\partial\theta_1}\left(\sin^{N-3}\theta_1\frac{\partial}{\partial\theta_1}\right);
\label{eq8.374b'}
\end{equation}
(iii) for the second latitude,
\begin{equation}
\frac{1}{\sqrt{g}}\frac{\partial}{\partial \theta_2}\left[\sqrt{g}\left(h_3\right)^{-2}\frac{\partial}{\partial \theta_2}\right]=\frac{1}{r^2\sin^2\theta_1\sin^{N-4}\theta_2}\frac{\partial}{\partial\theta_2}\left(\sin^{N-4}\theta_2\frac{\partial}{\partial\theta_2}\right);
\label{eq8.374c'}
\end{equation}
(iv) for the $n$-th latitude,
\begin{align}
\forall\,n=1,\ldots,N-2: \quad \frac{1}{\sqrt{g}}\frac{\partial}{\partial \theta_n}\left[\sqrt{g}\left(h_{n+1}\right)^{-2}\frac{\partial}{\partial \theta_n}\right]\nonumber\\
=\frac{1}{r^2\sin^2\theta_1\ldots\sin^2\theta_{n-1}\sin^{N-n-2}\theta_n}\frac{\partial}{\partial\theta_n}\left(\sin^{N-n-2}\theta_n\frac{\partial}{\partial\theta_n}\right);
\label{eq8.374d'}
\end{align}
(v) for the last or ($N-3$)-th latitude,
\begin{align}
\frac{1}{\sqrt{g}}\frac{\partial}{\partial \theta_{N-3}}\left[\sqrt{g}\left(h_{N-2}\right)^{-2}\frac{\partial}{\partial \theta_{N-3}}\right]&=\frac{1}{r^2\sin^2\theta_1\ldots\sin^2\theta_{N-4}\sin\theta_{N-3}}\nonumber\\
&\times\frac{\partial}{\partial\theta_{N-3}}\left(\sin\theta_{N-3}\frac{\partial}{\partial\theta_{N-3}}\right);
\label{eq8.374e'}
\end{align}
(vi) for the longitude,
\begin{equation}
\frac{1}{\sqrt{g}}\frac{\partial}{\partial\phi}\left[\sqrt{g}\left(h_{N-1}\right)^{-2}\frac{\partial}{\partial\phi}\right]=\frac{1}{r^2\sin^2\theta_1\ldots\sin^2\theta_{N-3}}\frac{\partial^2}{\partial\phi^2},
\label{eq8.374f'}
\end{equation}
which is again familiar for polar ($N=2$) and cylindrical ($N=3$) coordinates;
(vii) for the axial coordinate,
\begin{equation}
\frac{1}{\sqrt{g}}\frac{\partial}{\partial z}\left[\sqrt{g}\left(h_{N}\right)^{-2}\frac{\partial}{\partial z}\right]=\frac{\partial^2}{\partial z^2},
\label{eq8.374g'}
\end{equation}
\end{subequations}
which is the same for cylindrical ($N=3$) coordinates.

The comparison of the successive forms in hypercylindrical coordinates to cylindrical or polar coordinates can be made observing the l.h.s. of \eqref{eq8.349}. Substituting \eqref{eq8.374a'} to \eqref{eq8.374g'} in the Laplacian \eqref{eq11b} and then in \eqref{eq8.343b} specifies the Helmholtz equation in hypercylindrical coordinates
\begin{align}
-k^2\widetilde{F}=\nabla^2\widetilde{F}&=\frac{1}{r^{N-2}}\frac{\partial}{\partial r}\left(r^{N-2}\frac{\partial\widetilde{F}}{\partial r}\right)+\frac{1}{r^2\sin^{N-3}\theta_1}\frac{\partial}{\partial\theta_1}\left(\sin^{N-3}\theta_1\frac{\partial\widetilde{F}}{\partial\theta_1}\right)\nonumber\\
&+\frac{1}{r^2\sin^{2}\theta_1\sin^{N-4}\theta_2}\frac{\partial}{\partial\theta_2}\left(\sin^{N-4}\theta_2\frac{\partial\widetilde{F}}{\partial\theta_2}\right)+\ldots\nonumber\\
&+\frac{1}{r^2\sin^{2}\theta_1\ldots\sin^{2}\theta_{n-1}\sin^{N-n-2}\theta_n}\frac{\partial}{\partial\theta_n}\left(\sin^{N-n-2}\theta_n\frac{\partial\widetilde{F}}{\partial\theta_n}\right)+\ldots\nonumber\\
&+\frac{1}{r^2\sin^{2}\theta_1\ldots\sin^2\theta_{N-4}\sin\theta_{N-3}}\frac{\partial}{\partial\theta_{N-3}}\left(\sin\theta_{N-3}\frac{\partial\widetilde{F}}{\partial\theta_{N-3}}\right)\nonumber\\
&+\frac{1}{r^2\sin^2\theta_1\ldots\sin^2\theta_{N-3}}\frac{\partial^2\widetilde{F}}{\partial\phi^2}+\frac{\partial^2\widetilde{F}}{\partial z^2}.
\label{eq8.375b'}
\end{align}
In space ($N=3$), the Helmholtz equation in hypercylindrical coordinates simplifies to \eqref{eq8.349} in cylindrical coordinates, and in any dimension it can be written in a ``nested form'' that facilitates the subsequent solution by separation of variables.

Multiplying by $r^2$, the Helmholtz equation \eqref{eq8.375b'} is rewritten in a ``nested form'' as
\begin{align}
-k^2r^2\widetilde{F}&=r^2\nabla^2\widetilde{F}=r^2\left(\frac{\partial^2\widetilde{F}}{\partial r^2}+\frac{N-2}{r}\frac{\partial \widetilde{F}}{\partial r}\right)+\csc^{N-3}\theta_1\frac{\partial}{\partial\theta_1}\left(\sin^{N-3}\theta_1\frac{\partial \widetilde{F}}{\partial\theta_1}\right)\nonumber\\
&+\csc^2\theta_1\left[\csc^{N-4}\theta_2\frac{\partial}{\partial\theta_2}\left(\sin^{N-4}\theta_2\frac{\partial \widetilde{F}}{\partial\theta_2}\right)\right.\nonumber\\
&+\csc^2\theta_2\left[\csc^{N-5}\theta_3\frac{\partial}{\partial\theta_3}\left(\sin^{N-5}\theta_3\frac{\partial \widetilde{F}}{\partial\theta_3}\right)+\ldots\right.\nonumber\\
&+\csc^2\theta_{n-1}\left[\csc^{N-n-2}\theta_n\frac{\partial}{\partial\theta_n}\left(\sin^{N-n-2}\theta_n\frac{\partial \widetilde{F}}{\partial\theta_n}\right)+\ldots\right.\nonumber\\
&+\csc^2\theta_{N-4}\left[\csc\theta_{N-3}\frac{\partial}{\partial\theta_{N-3}}\left(\sin\theta_{N-3}\frac{\partial \widetilde{F}}{\partial\theta_{N-3}}\right)\right.\nonumber\\
&+\csc^2\theta_{N-3}\left.\frac{\partial^2 \widetilde{F}}{\partial\phi^2}\right]\ldots\left.\vphantom{\frac{\partial^2 \widetilde{F}}{\partial\phi^2}}\right]\ldots\left.\left.\vphantom{\frac{\partial^2 \widetilde{F}}{\partial\phi^2}}\right]\right]+r^2\frac{\partial^2\widetilde{F}}{\partial z^2}
\label{eq8.376'}
\end{align}
taking factors out of the brackets as early as possible. These equations in hypercylindrical coordinates can be obtained from the corresponding equations in hyperspherical coordinates making the transformation $N\rightarrow N-1$, noting also that there is one less latitude ($N-2\rightarrow N-3$), and an additional term must appear with regard to the coordinate $z$.

The solution of the Helmholtz equation in hypercylindrical coordinates is obtained by separation of variables,
\begin{equation}
\widetilde{F}\left(r, \theta_1, \ldots, \theta_{N-3}, \phi, z, \omega\right)=R\left(r\right)\, \Phi\left(\phi\right)\, Z\left(z\right)\prod_{n=1}^{N-3}\Theta_n\left(\theta_n\right),
\label{eq8.377'}
\end{equation}
with the substitution in \eqref{eq8.376'} and division by $\widetilde{F}$ leading to
\begin{align}
-k^2r^2&=\frac{r^2}{R}\left(\frac{\mathrm{d}^2R}{\mathrm{d}r^2}+\frac{N-2}{r}\frac{\mathrm{d}R}{\mathrm{d}r}\right)+\frac{1}{\Theta_1}\frac{\mathrm{d}^2\Theta_1}{\mathrm{d}\theta_1^2}+\left(N-3\right)\cot\theta_1\frac{1}{\Theta_1}\frac{\mathrm{d}\Theta_1}{\mathrm{d}\theta_1}\nonumber\\
&+\csc^2\theta_1\left[\frac{1}{\Theta_2}\frac{\mathrm{d}^2\Theta_2}{\mathrm{d}\theta_2^2}+\left(N-4\right)\cot\theta_2\frac{1}{\Theta_2}\frac{\mathrm{d}\Theta_2}{\mathrm{d}\theta_2}\right.\nonumber\\
&+\csc^2\theta_2\left[\frac{1}{\Theta_3}\frac{\mathrm{d}^2\Theta_3}{\mathrm{d}\theta_3^2}+\left(N-5\right)\cot\theta_3\frac{1}{\Theta_3}\frac{\mathrm{d}\Theta_3}{\mathrm{d}\theta_3}+\ldots\right.\nonumber\\
&+\csc^2\theta_{n-1}\left[\frac{1}{\Theta_n}\frac{\mathrm{d}^2\Theta_n}{\mathrm{d}\theta_n^2}+\left(N-n-2\right)\cot\theta_n\frac{1}{\Theta_n}\frac{\mathrm{d}\Theta_n}{\mathrm{d}\theta_n}+\ldots\right.\nonumber\\
&+\csc^2\theta_{N-4}\left[\frac{1}{\Theta_{N-3}}\frac{\mathrm{d}^2\Theta_{N-3}}{\mathrm{d}\theta_{N-3}^2}+\cot\theta_{N-3}\frac{1}{\Theta_{N-3}}\frac{\mathrm{d}\Theta_{N-3}}{\mathrm{d}\theta_{N-3}}\nonumber\right.\\
&+\csc^2\theta_{N-3}\left.\frac{1}{\Phi}\frac{\mathrm{d}^2\Phi}{\mathrm{d}\phi^2}\vphantom{\frac{\mathrm{d}^2\Theta_{N-3}}{\mathrm{d}\theta_{N-3}^2}}\right]\ldots\left.\vphantom{\frac{\mathrm{d}^2\Theta_{N-3}}{\mathrm{d}\theta_{N-3}^2}}\right]\ldots\left.\left.\vphantom{\frac{\mathrm{d}^2\Theta_{N-3}}{\mathrm{d}\theta_{N-3}^2}}\right]\right]+\frac{r^2}{Z}\frac{\mathrm{d}^2Z}{\mathrm{d}z^2},
\label{eq8.378'}
\end{align}
and separating the variables $\left(r, \theta_1, \theta_2, \ldots, \theta_n, \ldots, \theta_{N-3}, \phi, z\right)$ as much as possible, for the next step. Thus, the Helmholtz equation in hypercylindrical coordinates written in ``nested form'' \eqref{eq8.376'} has the solution by separation of variables \eqref{eq8.377'} leading to a set of $N$ separate ordinary differential equations, that are considered next.

The last term is the only one depending on the coordinate $z$ and it must be equal to a constant, namely $-K^2$, and results in the same equation as \eqref{eq8.351b}, whose solutions are the same sinusoidal functions as \eqref{eq8.352b}. The last but one term is the only one depending on the longitude $\phi$, so it must be a constant $-m^2$, leading to \eqref{eq8.351a}. It is exactly the same with regard to hyperspherical coordinates, comparing \eqref{eq8.351a} of $\phi$ with \eqref{eq17c}. The first and the last terms on the r.h.s. depend on the distance to the axis $r$, and so it equals to a constant, denoted by $q\left(1+q\right)$, leading to
\begin{equation}
r^2\frac{\mathrm{d}^2R}{\mathrm{d} r^2}+\left(N-2\right)r\frac{\mathrm{d} R}{\mathrm{d} r}+\left[\left(k^2-K^2\right)r^2-q\left(q+1\right)\right]R=0,
\label{eq8.379b'}
\end{equation}
that simplifies to \eqref{eq8.353b} for $N=3$ in cylindrical coordinates. The last latitude $\theta_{N-3}$ appears only in the last two terms in square brackets on the r.h.s. of \eqref{eq8.378'} and must be a constant leading to \eqref{eq17d} that on account of \eqref{eq8.351a} is equivalent to \eqref{eq17e}, performing the transformation $N\rightarrow N-1$. A similar reasoning for $\theta_{N-4}$ leads to \eqref{eq17f} with the same transformation. The corresponding ordinary differential equation for $\theta_{N-n}$, with $n=4,\ldots,N-1$, is equivalent to \eqref{eq17g}, but the factor $\left(n-1\right)$ must be replaced by $\left(n-2\right)$ in the second term of the l.h.s. (the first latitude corresponds to $n=N-1$). Regarding the latitudes, the parallelism from hyperspherical to hypercylindrical coordinates can be made considering the transformation $N\rightarrow N-1$. As in the case of hyperspherical coordinates, the constants introduced in the equations lead to $q_1=q$.

The solution by separation of variables of the Helmholtz equation in hypercylindrical coordinates \eqref{eq8.375b'} leads to a set of $N$ separate ordinary differential equations. The simplest dependence is on longitude and its solution is specified by \eqref{eq8.352a}, as in the cases of cylindrical and spherical coordinates, and also as in the case of hyperspherical coordinates observing the solution \eqref{eq18}, because the differential equation of $\phi$ is the same in all system of coordinates. Regarding the  radial dependence, the differential equation \eqref{eq8.379b'} is very similar to \eqref{eq17b} in the case of hyperspherical coordinates. The former equation can be obtained by the latter making the transformations on the constants $N\rightarrow N-1$ and $k^2\rightarrow k^2-K^2=\overline{k}^2$. In fact, the axial wavenumber $K$ is zero in hyperspherical coordinates. Therefore, the radial dependence in hypercylindrical coordinates can be obtained from the radial dependence in hyperspherical coordinates, that is, from \eqref{eq21b} and making the two transformations previously mentioned. Consequently, in hypercylindrical coordinates, it is also specified by a linear combination of Bessel functions, that is,
\begin{subequations}
\begin{equation}
R\left(r\right)=r^{\left(3-N\right)/2}Z_\sigma\left(\overline{k}r\right),
\end{equation}
but of order $\sigma$ equal to
\begin{equation}
\sigma^2=q\left(q+1\right)+\left(\frac{N}{2}-\frac{3}{2}\right)^2
\label{eq_sigma}
\end{equation}
\end{subequations}
and variable $\overline{k}r$ where $\overline{k}$ is again the radial wavenumber given by \eqref{eq8.354a}. The last latitude $\theta_{N-3}$ satisfies an associated Legendre differential equation, similar to \eqref{eq17e}, with order $m$ and degree $q_{N-3}$, thus leading to \eqref{eq23}, but making the transformation $N\rightarrow N-1$. For the remaining co-latitudes, a more general differential equation appears, similar to \eqref{eq24b}, that may be designated the hyperspherical associated Legendre differential equation. Its solution in terms of hyperspherical associated Legendre functions of two kinds specifies the dependence of the solution of the Helmholtz equation \eqref{eq8.343b} in hypercylindrical coordinates on the ($N-n$)-th latitude, with $n=4,\ldots,N-1$, given by \eqref{eq25d}. The only difference remains in the dimension because in hypercylindrical coordinates it is given by
\begin{equation}
2\lambda_n=n-3
\label{eq_lambda}
\end{equation}
and not as in \eqref{eq25c} for hyperspherical coordinates.

The simplest case of the solution of Helmholtz equation in hypercylindrical coordinates beyond cylindrical harmonics is four-dimensional, indicated in the appendix \ref{sec:appB.3}. 

\subsection{Hypercylindrical harmonics in four dimensions}
\label{sec:appB.3}

The four-dimensional case in hypercylindrical coordinates is similar to the three-dimensional case in hyperspherical coordinates. The relation to Cartesian coordinates, using \eqref{eq8.366f'-h'} and the first equation of \eqref{eq2}, involves an orthogonal Cartesian coordinate instead of a second latitude, that is,
\begin{align}
0\leq r<\infty, \quad 0\leq\theta\leq\pi, \quad 0\leq\phi\leq2\pi, \quad -\infty<z<+\infty: \nonumber\\
\left\lbrace w, x, y, z\right\rbrace=\left\lbrace r\cos\theta, r\sin\theta\cos\phi, r\sin\theta\sin\phi, z\right\rbrace. \label{eq8.407a-b}
\end{align}
The inverse coordinate transformation from four-dimensional Cartesian to hypercylindrical, knowing \eqref{eq8.367a'} to \eqref{eq8.367f'-g'}, is
\begin{align}
\begin{aligned}
r&=\left|w^2+x^2+y^2\right|^{1/2},\\
\cot\theta&=w\left|x^2+y^2\right|^{-1/2},\\
\cot\phi&=\frac{x}{y},\\
z&=z.
\end{aligned}
\label{eq8.408a-d}
\end{align}
The hypercylindrical base vectors, defined in \eqref{eq8.368a'-f'}, are
\begin{align}
\begin{aligned}
\overrightarrow{e}_r&=\left\lbrace\cos\theta, \sin\theta\cos\phi, \sin\theta\sin\phi, 0\right\rbrace,\\
\overrightarrow{e}_\theta&=r\left\lbrace-\sin\theta, \cos\theta\cos\phi, \cos\theta\sin\phi, 0\right\rbrace,\\
\overrightarrow{e}_\phi&=r\sin\theta\left\lbrace0,-\sin\phi,\cos\phi, 0\right\rbrace,\\
\overrightarrow{e}_z&=\left\lbrace0, 0, 0, 1\right\rbrace.
\end{aligned}
\label{eq8.409a-d}
\end{align}
They are mutually orthogonal and their modulus specify the scale factors \eqref{eq8.370a-b}, resulting in
\begin{subequations}
\begin{equation}
h_r=h_z=1, \quad h_\theta=r, \quad h_\phi=r\sin\theta,
\label{eq8.410a-d}
\end{equation}
as well as the determinant of the covariant metric tensor
\begin{equation}
\left|g\right|^{1/2}=r^2\sin\theta.
\label{eq8.410e}
\end{equation}
\end{subequations}
The four-dimensional Helmholtz equation in hypercylindrical coordinates, using \eqref{eq8.375b'}, is
\begin{align}
-k^2\widetilde{F}&=\nabla^2\widetilde{F}\nonumber\\
&=\frac{1}{r^2}\frac{\partial}{\partial r}\left(r^2\frac{\partial\widetilde{F}}{\partial r}\right)+\frac{1}{r^2\sin\theta}\frac{\partial}{\partial\theta}\left(\sin\theta\frac{\partial\widetilde{F}}{\partial\theta}\right)+\frac{1}{r^2\sin^2\theta}\frac{\partial^2\widetilde{F}}{\partial\phi^2}+\frac{\partial^2\widetilde{F}}{\partial z^2}
\label{eq8.411a,b}
\end{align}
where the Laplacian is similar to spherical \eqref{eq8.357}, changing the meaning of $r$ and adding the last term of the l.h.s. of the cylindrical equation \eqref{eq8.349}. Then, the solution by separation of variables
\begin{equation}
\widetilde{F}\left(r, \theta, \phi, z, \omega\right)=R\left(r\right)\,\Theta\left(\theta\right)\,\Phi\left(\phi\right)\,Z\left(z\right)
\label{eq8.412}
\end{equation}
is equal to the product of \eqref{eq8.363} by \eqref{eq8.352b} and using the last one taking into account the transformation $k^2\rightarrow k^2-K^2=\overline{k}^2$, that is,
\begin{align}
\widetilde{F}\left(r, \theta, \phi, z, \omega\right)&=\left(B_+\mathrm{e}^{\mathrm{i} Kz}+B_-\mathrm{e}^{-\mathrm{i} Kz}\right)\left(C_+\mathrm{e}^{\mathrm{i} m\phi}+C_-\mathrm{e}^{-\mathrm{i} m\phi}\right)\nonumber\\
&\times\left[D_+P_q^m\left(\cos\theta\right)+D_-Q_q^m\left(\cos\theta\right)\right]\left[E_+J_q\left(\overline{k}r\right)+E_-Y_q\left(\overline{k}r\right)\right].
\label{eq8.413}
\end{align}
It has been shown that the solution of the Helmholtz equation in four-dimensional hypercylindrical coordinates \eqref{eq8.411a,b} is \eqref{eq8.413} with arbitrary azimuthal $m$, radial $\overline{k}$ and axial $K$ wavenumbers. The superposition of solutions is valid and is used to specify the general integral of the equation of mathematical physics \eqref{eq13b} in hypercylindrical coordinates. The hyperspherical associated Legendre functions only appear for $N\geq4$ dimensions.

\subsection{Linear superposition of hyperspherical and hypercylindrical harmonics}
\label{sec:appB.4}

In the case of hyperspherical coordinates, when substituting the solution
\begin{equation}
\widetilde{F}\left(r, \theta_1, \ldots, \theta_{N-2}, \phi, \omega\right)=R\left(r\right)\,\Phi\left(\phi\right)\,\prod_{n=1}^{N-2}\Theta_n\left(\theta_n\right)
\label{eq8.377}
\end{equation}
of the Helmholtz equation \eqref{eq8.343b} in the generalized isotropic equation of mathematical physics \eqref{eq13b} using \eqref{eq8.342a}, the follows two changes are made: (i) a general superposition of solutions is used with integer parameters; (ii) only one function in each coordinate is used to simplify the formulas. Thus, the generalized isotropic equation of mathematical physics \eqref{eq13b} in hyperspherical coordinates has the solution as a superposition of hyperspherical harmonics (similar to the solution \eqref{eq26} in time domain)
\begin{align}
F\left(r, \theta_1, \ldots, \theta_{N-2}, \phi, t\right)=\int\limits_{-\infty}^{+\infty}\mathrm{d}\omega\,B_{m, q_1, \ldots, q_{N-2}}\left(\omega\right)\sum_{m=-\infty}^{+\infty}\mathrm{e}^{\mathrm{i}\left(m\phi-\omega t\right)}Z_\sigma\left(kr\right)\nonumber\\
\times\sum_{q_{N-2}=1}^\infty P_{q_{N-2}}^m\left(\cos\theta_{N-2}\right)\sum_{q_{N-3}, \ldots, q_1=1}^\infty\prod_{n=3}^{N-1}P_{q_{N-n}, \lambda_n}^{\mu_n}\left(\cos\theta_{N-n}\right)
\label{eq8.414}
\end{align}
involving: (i) an integration over frequency and arbitrary coefficients $B$ depending not only on the frequency, but also on the azimuthal $m$, radial $q=q_1$ and latitudinal $q_1, \ldots, q_{N-2}$ wavenumbers; (ii) the dependence on longitude $\phi$ that is specified by sinusoidal functions \eqref{eq18} with azimuthal wavenumber $m$; (iii) the dependence on the radius $r$ that is specified by Bessel functions \eqref{eq21b} of order \eqref{eq20a} and variable $kr$ where $k$ is the wavenumber defined by \eqref{eq8.343a}; (iv) the dependence on the last latitude $\theta_{N-2}$ specified by associated Legendre functions \eqref{eq23} of order $m$ and degree $q_{N-2}$; (v) the dependence on the remaining $N-3$ latitudes $\theta_1, \ldots, \theta_{N-3}$ specified by hyperspherical associated Legendre functions \eqref{eq25d} with degree \eqref{eq25a}, order \eqref{eq25b} and dimension \eqref{eq25c} knowing that $n=3, \ldots, N-1$. The sums over the wavenumbers $m, q_1, \ldots, q_{N-2}$ could be replaced by integrals. Similar results apply in hypercylindrical coordinates.

In the case of hypercylindrical coordinates, the Laplacian
\eqref{eq8.375b'}, compared to hyperspherical coordinates, has one less latitude replaced by an axial Cartesian coordinate in the generalized equation of mathematical physics \eqref{eq13b} leading to
\begin{align}
0&=k^2 \widetilde{F}+\frac{1}{r^{N-2}}\frac{\partial}{\partial r}\left(r^{N-2}\frac{\partial\widetilde{F}}{\partial r}\right)+\frac{1}{r^2\sin^{N-3}\theta_1}\frac{\partial}{\partial\theta_1}\left(\sin^{N-3}\theta_1\frac{\partial\widetilde{F}}{\partial\theta_1}\right)\nonumber\\
&+\frac{1}{r^2\sin^{2}\theta_1\sin^{N-4}\theta_2}\frac{\partial}{\partial\theta_2}\left(\sin^{N-4}\theta_2\frac{\partial\widetilde{F}}{\partial\theta_2}\right)+\ldots\nonumber\\
&+\frac{1}{r^2\sin^{2}\theta_1\ldots\sin^{2}\theta_{n-1}\sin^{N-n-2}\theta_n}\frac{\partial}{\partial\theta_n}\left(\sin^{N-n-2}\theta_n\frac{\partial\widetilde{F}}{\partial\theta_n}\right)+\ldots\nonumber\\
&+\frac{1}{r^2\sin^{2}\theta_1\ldots\sin^2\theta_{N-4}\sin\theta_{N-3}}\frac{\partial}{\partial\theta_{N-3}}\left(\sin\theta_{N-3}\frac{\partial\widetilde{F}}{\partial\theta_{N-3}}\right)\nonumber\\
&+\frac{1}{r^2\sin^2\theta_1\ldots\sin^2\theta_{N-3}}\frac{\partial^2\widetilde{F}}{\partial\phi^2}+\frac{\partial^2\widetilde{F}}{\partial z^2}.
\label{eq8.415}
\end{align}
The solution is similar to \eqref{eq8.414} with one less latitude, specifically $\theta_{N-2}$, replaced by an axial Cartesian coordinate $z$, whose its solution is given by \eqref{eq8.352b}, and that leads to
\begin{align}
F\left(r, \theta_1, \ldots, \theta_{N-3}, \phi, z, t\right)&=\int\limits_{-\infty}^{+\infty}\int\limits_{-\infty}^{+\infty}\mathrm{d} K\,\mathrm{d}\omega\,B_{m, q_{1}, \ldots, q_{N-3}}\left(K, \omega\right)\nonumber\\
&\times\sum_{m=-\infty}^{+\infty}\mathrm{e}^{\mathrm{i}\left(Kz+m\phi-\omega t\right)}Z_\sigma\left(\overline{k}r\right)\nonumber\\
&\times\sum_{q_{N-3}=1}^{\infty}P_{q_{N-3}}^m\left(\cos\theta_{N-3}\right)\nonumber\\
&\times\sum_{q_{N-4}, \ldots, q_{1}=1}^\infty\prod_{n=4}^{N-1}P_{q_{N-n}, \lambda_n}^{\mu_n}\left(\cos\theta_{N-n}\right). \label{eq8.416}
\end{align}
Thus, the solution of the generalized isotropic equation of mathematical physics in hypercylindrical coordinates \eqref{eq8.416} is similar compared to the solution in hyperspherical coordinates \eqref{eq8.414}, with the following distinctions: (i) the last latitude is specified by associated Legendre functions, therefore $\theta_{N-2}$ for hyperspherical coordinates, however $\theta_{N-3}$ for  hypercylindrical coordinates; (ii) in both systems, all others latitudes are specified by hyperspherical associated Legendre functions, however their dimensions are different, that is, they are given by \eqref{eq25d} for hyperspherical or \eqref{eq_lambda} for hypercylindrical coordinates; (iii) in both systems, there is the integration of frequency $\omega$, however in hypercylindrical coordinates there is also of the axial wavenumber $K$, and the coefficient $B$ is a function of two variables (and not one) with azimuthal $m$, radial $q=q_1$ and $N-4$ (and not $N-3$) latitudinal $q_{2}, \ldots, q_{N-3}$ wavenumbers; (iv) in this last case, the radial dependence is specified by Bessel functions of order \eqref{eq_sigma} and variable $\overline{k}r$ involving the distance from the axis $r$ and the radial wavenumber $\overline{k}\neq k$. These results involve special functions, for example, Bessel and hyperspherical associated Legendre functions that are solutions of linear differential equations with variable coefficients. 

\providecommand{\href}[2]{#2}

\bibliographystyle{unsrt}
\bibliography{ms}

\address{
CCTAE, IDMEC, LAETA, Instituto Superior T\'{e}cnico, Universidade de Lisboa,\\
Av. Rovisco Pais 1, 1049-001 Lisboa, Portugal\\
\email{luis.campos@tecnico.ulisboa.pt}\\
\email{manuel.jose.dos.santos.silva@tecnico.ulisboa.pt}
\\
}

\end{document}